\newtheorem{theorem}{Theorem}[section]
\newtheorem{corollary}{Corollary}[theorem]
\newtheorem{Def}{Definition}
\newtheorem{Lem}{Lemma}
\newcommand*{\bordl}{\multicolumn{1}{|c}{}}
\newcommand*{\bordr}{\multicolumn{1}{c|}{}}
\newcommand{\etal}{\textit{et~al.}}
\newcommand{\CH}{\mathcal{CH}}
\def\lc{\left\lfloor}   
\def\rc{\right\rfloor}
\begin{document}

\title{Affine Equivalence in the Clifford Hierarchy}
\author{Jonas T. Anderson}
\email{jonas.tyler.anderson@gmail.com}
\affiliation{}
\orcid{0000-0002-3230-6712}
 
\author{Andrew Connelly}
\email{drew.j.connelly@gmail.com}
\affiliation{}
\orcid{0000-0002-2596-3706}

\maketitle
\begin{abstract}
    In this paper we prove a collection of results on the structure of permutations in the Clifford Hierarchy. First, we leverage results from the cryptography literature on affine equivalence classes of 4-bit permutations which we use to find all 4-qubit permutations in the Clifford Hierarchy. We then use the classification of 4-qubit permutations and previous results on the structure of diagonal gates in the Clifford Hierarchy to prove that all 4-qubit gates in the third level of the Clifford Hierarchy are semi-Clifford. Finally, we introduce the formalism of cycle structures to permutations in the Clifford Hierarchy and prove a general structure theorem about them. We also classify many small cycle structures up to affine equivalence. Interestingly, this classification is independent of the number of qubits.        
\end{abstract}

\section{Motivation}

The mathematical structure of the qubit Clifford Hierarchy remains an active area of research \cite{Hu2021ClimbingHierarchy, Anderson2024, Pllaha2020, Rengaswamy2019, He2024, Bengtsson2014}. Understanding this structure places important restrictions on the types of transversal gates available to qubit stabilizer codes \cite{Jochym-OConnor2018DisjointnessGates, Anderson2016}. In addition to the work on the qubit Clifford Hierarchy, recent work has begun to elucidate the structure of the qudit Clifford Hierarchy \cite{deSilva2021EfficientDimensions, deSilva2025}. 

In \cite{Anderson2024} we classified all groups in the Clifford Hierarchy comprised of generalized semi-Clifford elements\footnote{The groups we found were Clifford groups and Dihedral groups each having support on distinct qubits. There was one possible exception where additional elements with limited support in both groups could not be directly ruled out. See \cite{Anderson2024} for more details.} . We also showed that a gate is generalized semi-Clifford and in $\CH$ if and only if it is Clifford equivalent to a product of a permutation and a diagonal gate and that the permutation and diagonal gate are both members of $\CH$. The diagonal gates in $\CH$ are known \cite{Cui2017}, but a systematic study of the permutations in $\CH$ has not been carried out\footnote{During our research a systematic study of permutations in the third level of the Clifford Hierarchy was published \cite{He2024}. Our work here is mostly concerned with permutations in the Clifford Hierarchy independent of level and our results are complementary to theirs.} Furthermore, it is conjectured that all gates in $\CH$ are generalized semi-Clifford\cite{Zeng2008}. If true, classifying the permutations in $\CH$ would lead to a classification of all gates in $\CH$. 

The $n$-qubit Clifford Hierarchy \cite{Gottesman1999, Zeng2008} is recursively defined as 

\begin{equation}\label{CHdef}
    \mathcal{CH}_{k} \equiv \{U | U P U^\dagger \subseteq \mathcal{CH}_{k-1}, \forall P\in\mathcal{P}_n\}
\end{equation}
with the first level ($k=1$) defined as $\mathcal{CH}_1 \equiv \mathcal{P}_n$, the $n$-qubit Pauli group. $\mathcal{CH}_2$ is the $n$-qubit Clifford group. For $k\ge3$, the elements of $\mathcal{CH}_k$ no longer form a group. 

Our current knowledge of the general structure of the qubit Clifford Hierarchy is summarized in Table~\ref{table:CH}.

\begin{table}[!h]
\centering
\begin{tabular}{l|c|c|c|c}
\diagbox{$n$}{$k$} & $k=1$  & $k=2$  & $k=3$ & $k\ge4$\\
\hline
$n=1$ & $\mathcal{P}$ & $\mathcal{C}$ & $\mathcal{SC}$ & $\mathcal{SC}$ \\
\hline
$n=2$ & $\mathcal{P}$ & $\mathcal{C}$ & $\mathcal{SC}$  & $\mathcal{SC}$ \\
\hline
$n=3$ & $\mathcal{P}$ & $\mathcal{C}$ & $\mathcal{SC}$  & $\mathcal{SC}^+$  \\
\hline
$n=4$ & $\mathcal{P}$ & $\mathcal{C}$ & \textcolor{magenta}{$\mathcal{SC}$}  & $\mathcal{SC}^+$ \\
\hline
$n=5$ & $\mathcal{P}$ & $\mathcal{C}$ & $\mathcal{GSC}^-$  & $\mathcal{SC}^+$ \\
\hline
$n=6$ & $\mathcal{P}$ & $\mathcal{C}$ & $\mathcal{GSC}^-$  & $\mathcal{SC}^+$ 
\\
\hline
$n\ge 7$ & $\mathcal{P}$ & $\mathcal{C}$ & $\mathcal{GSC}$  & $\mathcal{SC}^+$ 
\\
\hline
\end{tabular}
\caption{Here $k$ indicates the level in the Clifford Hierarchy, $n$ the number of qubits, $\mathcal{P}$ denotes the Pauli group, and $\mathcal{C}$ denotes the Clifford group. Table entries with $\mathcal{SC}$ indicate that the elements are semi-Clifford, $\mathcal{GSC}$ indicates that the elements are generalized semi-Clifford, and $\mathcal{SC}^+$ indicates that gates are known to exist which are not semi-Clifford (though they are generalized semi-Clifford), but it has not been proven that all such elements must be generalized semi-Clifford. We use $\mathcal{GSC}^-$ to indicate cases that have been proven to only contain generalized semi-Clifford elements, but only semi-Clifford elements are currently known. $\mathcal{GSC}$ ($k=3,n\ge7$) indicates that these cases have been proven to be generalized semi-Clifford and elements are known which are not semi-Clifford. Our result ($ \mathcal{GSC}^- \rightarrow \mathcal{SC}$ for $n=4,k=3$) is presented in \textcolor{magenta}{magenta}. All other results presented in this table were proven in \cite{Zeng2008, Beigi2010}.}\label{table:CH} 
\end{table}

\section{Affine Equivalence}
In \cite{Zeng2008} it was shown that multiplication (left or right) by Clifford elements does not change the level of a gate in the Clifford Hierarchy. When considering which permutation gates are in $\mathcal{CH}$ it is useful to split the permutation gates into equivalence classes. A theory of Clifford equivalence of permutation gates has not yet been developed; however, the cryptography community has developed a theory of affine equivalence of permutations. Two permutations are affine equivalent when they satisfy the following definition:  

\begin{Def}\textbf{Affine equivalent:} $P_1 \approx P_2 \iff P_1 = L P_2 R$ where $L,R$ are affine permutations.
\end{Def}

\textbf{Affine permutations} are Clifford permutations which are generated by CNOTs and Pauli $X$s. Furthermore, affine equivalent permutations are Clifford equivalent circuits.\footnote{It is not currently known if Clifford equivalence and affine equivalence of permutations give the same number of equivalence classes. We cannot prove equivalence, but prove an intermediate result in the appendix \ref{affineproof}. 
If it were true that Clifford equivalence of permutations differs from affine equivalence, the results here would still hold. The only consequence would be that the number of equivalence classes of permutations in $\mathcal{CH}$ could decrease.} Hence the space of permutations to consider when classifying levels of $\mathcal{CH}$ can be significantly reduced by considering representatives of affine equivalence classes rather than individual permutations. We leverage this fact to classify all permutations in $\mathcal{CH}$ up to 4 qubits and later to classify small cycle structures on any number of qubits.

\begin{table}[h!]
\centering
\begin{tabular}{l|c|c|c|c|c}
bits ($n$) & $1$ & $2$ &  $3$ & $4$ & $5$ \\
\hline 
AE classes & 1 & 1 & 4 & 302 & $2569966041123963092\approx 2^{61}$
\\
\hline
Permutations & $2^1!$ & $2^2!$ & $2^3!$ & $2^4! \approx 2^{44}$ & $2^5! \approx 2^{118}$ \\
\end{tabular}
\caption{From \cite{Draper2009} the number of affine equivalence classes compared to the total number of permutations for up to $n=5$ bits. }\label{table:numAE} 
\end{table}

\section{Permutations on 4 qubits}\label{Perms4Qs}
We leverage the results of \cite{Draper2009, Canniere2007} which lists representatives of all affine equivalent classes of permutations on 4 qubits\footnote{These results are classical and refer to permutations on bits, but are trivially extended to permutations on qubits.}. Of the 302 classes listed we find 5 are in the Clifford Hierarchy. Contrast this with the 3 qubit case where 2 of the 4 affine equivalence classes are in the Clifford Hierarchy. We conjecture that the fraction of affine classes in the Clifford Hierarchy decreases with increasing number of qubits, $n$, and approaches zero for large $n$. Curiously, in  \cite{Canniere2007} the affine equivalent classes of permutations which were in the Clifford Hierarchy were the last five (of 302) on the list which is roughly ordered by how useful the permutation would be for AES cryptography. This is closely related to how complex a circuit is needed to express a permutation in an affine equivalence class. This connection should be explored and we encourage readers to do so.\footnote{Another open problem to consider: What is the maximum level, $k$, of an $n$-qubit permutation that is in $\mathcal{CH}$? Is it simply $n$? }

We give a circuit representation of an element in each of the 4-qubit affine equivalence classes in $\mathcal{CH}$ below:

\begin{equation*}
\text{Id.},
\begin{array}{ccc}
\Qcircuit @C=0.5em @R=1.3em {
& \ctrl{1} & \qw \\
& \ctrl{1} & \qw \\
& \ctrl{1}& \qw \\
& \targ& \qw \\
}
\end{array},
\begin{array}{ccc}
\Qcircuit @C=0.5em @R=1.3em {
& \qw & \qw \\
& \ctrl{1} & \qw \\
& \ctrl{1}& \qw \\
& \targ& \qw \\
}
\end{array},
\begin{array}{ccc}
\Qcircuit @C=0.5em @R=1.3em {
& \ctrl{1} & \ctrl{1} & \qw \\
& \ctrl{1} & \ctrl{1} & \qw \\
& \ctrl{1}&  \targ & \qw \\
& \targ & \qw & \qw  \\
}
\end{array},
\begin{array}{ccc}
\Qcircuit @C=0.5em @R=1.3em {
& \ctrl{1} & \qw & \qw \\
& \ctrl{1} & \ctrl{1} & \qw \\
& \targ&  \ctrl{1} & \qw \\
& \qw & \targ & \qw  \\
}
\end{array}.
\end{equation*}
These gates (and all other affine equivalent gates) are in: $\mathcal{CH}_1$, $\mathcal{CH}_4/\mathcal{CH}_3$, $\mathcal{CH}_3/\mathcal{CH}_2$, $\mathcal{CH}_4/\mathcal{CH}_3$, $\mathcal{CH}_4/\mathcal{CH}_3$, respectively.

While the number of affine equivalent classes is more manageable than the number of permutations there are still too many to easily check for $n\ge 5$. 

\section{Computational attack on $\mathcal{CH}^{n=4}_3$}

In this section we show that all gates in $\mathcal{CH}^{n=4}_3$ can be placed into a tractable number of Clifford equivalence classes. We do this by creating a set of Clifford equivalence classes that is larger than $\mathcal{CH}^{n=4}_3$, but contains all of $\mathcal{CH}^{n=4}_3$. Then, by selecting a single member of each class and testing whether or not it is semi-Clifford and in the third level of $\mathcal{CH}$, we can determine if any members of $\mathcal{CH}^{n=4}_3$ are not semi-Clifford.  

From Beigi and Shor \cite{Beigi2010} we know that the third level of $\mathcal{CH}$ consists of generalized semi-Clifford elements for any number of qubits $n$ and from \cite{Zeng2008} we know that $\mathcal{CH}^{n=3}_3$ is semi-Clifford. A counterexample by Gottesman and Mochon proves that $\mathcal{CH}^{n\ge7}_3$ contains elements which are generalized semi-Clifford, but not semi-Clifford. Therefore $\mathcal{CH}^{n=4}_3$ is the smallest $n$ such that the structure of $\mathcal{CH}_3$ is not fully known. 

A generalized semi-Clifford gate on $n$ qubits can be written as $$C_1 \pi d C_2$$ where $C_1, C_2$ are $n$-qubit Clifford gates, $\pi$ is a $2^n \times 2^n$ permutation matrix, and $d$ is an $n$-qubit diagonal gate. From \cite{Anderson2024} (Appendix A) we know that a generalized semi-Clifford gate is in $\mathcal{CH}_3$ only if $\pi \in \mathcal{CH}_3$ and $d \in \mathcal{CH}_k$ for some finite level $k$. Note that the reason we cannot easily restrict $d$ to $\mathcal{CH}_3$ is that conjugation by a non-Clifford permutation can change the level of a diagonal gate.  

We found and listed all 4-qubit affine equivalence classes of permutations in the Clifford hierarchy in Section~\ref{Perms4Qs}. Only one of them is in $\mathcal{CH}_3 / \mathcal{CH}_2$ namely the $CCX_{2,3,4}$ gate. Since all diagonal gates are semi-Clifford, any candidate generalized semi-Clifford gate in $\mathcal{CH}_3$ that is not semi-Clifford must be a product of a permutation in $\mathcal{CH}_3 / \mathcal{CH}_2$ and a diagonal gate in $\mathcal{CH} / \mathcal{CH}_2$.\footnote{A further consideration is that the product of these gates must not be diagonalizable by Clifford gates or else it would be semi-Clifford.} 

\begin{Lem}\label{Lem:easy1}
    Diagonal gates in $\mathcal{CH}_k$ can be partitioned into equivalence classes where equivalency is defined by multiplication by diagonal Clifford gates. 
\end{Lem}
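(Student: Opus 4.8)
The plan is to exhibit the claimed partition as the orbit decomposition of a group action, so that the partition property follows automatically from elementary facts. First I would fix notation: let $\mathcal{D}_k$ denote the set of diagonal gates in $\mathcal{CH}_k$, and let $\mathcal{D}_{\mathcal{C}} = \mathcal{D}_2$ denote the set of diagonal Clifford gates. The opening step is to observe that $\mathcal{D}_{\mathcal{C}}$ is a group: it contains the identity, it is closed under products (a product of two diagonal matrices is diagonal, and the Clifford group is closed under multiplication), and it is closed under inverses (the inverse of a diagonal unitary is diagonal, and Clifford gates invert within the Clifford group).

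Next I would verify that $\mathcal{D}_k$ is stable under multiplication by $\mathcal{D}_{\mathcal{C}}$, so that we genuinely have an action of $\mathcal{D}_{\mathcal{C}}$ on $\mathcal{D}_k$. Given $d \in \mathcal{D}_k$ and $c \in \mathcal{D}_{\mathcal{C}}$, the product $dc$ is again diagonal, since a product of diagonal matrices is diagonal. Moreover $dc$ remains in $\mathcal{CH}_k$, because multiplication on either side by a Clifford element does not change the level of a gate in the Clifford Hierarchy \cite{Zeng2008}. Hence $dc \in \mathcal{D}_k$. Because diagonal matrices commute, left and right multiplication coincide, so the side of multiplication is immaterial and the action is well defined.

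With the group action in hand, I would define $d_1 \sim d_2 \iff d_1 d_2^{-1} \in \mathcal{D}_{\mathcal{C}}$ and check the three axioms: reflexivity from $d d^{-1} = I \in \mathcal{D}_{\mathcal{C}}$, symmetry from closure under inverses, and transitivity from closure under products. The equivalence classes are then exactly the orbits of $\mathcal{D}_{\mathcal{C}}$ acting on $\mathcal{D}_k$, and orbits of a group action always partition the underlying set, which establishes the lemma. There is no substantial obstacle here; the statement is essentially a bookkeeping observation. The only ingredient beyond the definition of a group action is the stability of $\mathcal{D}_k$, and even there the real content is the cited fact that Clifford multiplication preserves hierarchy level, with diagonality preserved trivially.
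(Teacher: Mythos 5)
Your proof is correct. The paper states this lemma without any proof at all (it is presented as an ``easy'' lemma and used immediately afterward), so your argument --- diagonal Clifford gates form a group, the set $\mathcal{D}_k$ of diagonal gates in $\mathcal{CH}_k$ is stable under multiplication by them because Clifford multiplication preserves hierarchy level while diagonality is preserved trivially, and orbits of a group action partition the set --- is exactly the routine bookkeeping the paper takes for granted, with no gaps.
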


\begin{Lem}\label{Lem:easy2}
Conjugation by Clifford permutations on all diagonal gates in $\mathcal{CH}_k$ preserves the size and number of equivalence classes.  
\end{Lem}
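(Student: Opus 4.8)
The plan is to realize each equivalence class from Lemma~\ref{Lem:easy1} as a coset of the diagonal Clifford group and then to show that conjugation by a fixed Clifford permutation $\sigma$ permutes these cosets bijectively. Write $\mathcal{D}_k$ for the set of diagonal gates in $\mathcal{CH}_k$ and $\mathcal{D}_C$ for the group of diagonal Clifford gates. By Lemma~\ref{Lem:easy1} the classes are exactly the cosets $\mathcal{D}_C d$ with $d\in\mathcal{D}_k$ (left and right cosets coincide since diagonal gates commute). Define $\Phi_\sigma(d)=\sigma d \sigma^{-1}$ for a Clifford permutation $\sigma$. The goal is to show that $\Phi_\sigma$ sends classes to classes, that the induced map on the set of classes is a bijection, and that it matches each class with its image one-to-one.

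First I would establish three closure facts. (i) $\Phi_\sigma$ maps diagonal gates to diagonal gates: conjugating $\mathrm{diag}(d_1,\dots,d_N)$ by a permutation matrix merely reindexes the diagonal entries, so the result is again diagonal. (ii) $\Phi_\sigma$ preserves the level: since $\sigma$ is Clifford, the result of \cite{Zeng2008} that left or right multiplication by a Clifford does not change the level gives $\sigma\,\mathcal{CH}_k\,\sigma^{-1}=\mathcal{CH}_k$, so $\Phi_\sigma$ restricts to a bijection $\mathcal{D}_k\to\mathcal{D}_k$. (iii) $\Phi_\sigma$ normalizes $\mathcal{D}_C$: applying (i) and (ii) at level $k=2$ gives $\sigma\,\mathcal{D}_C\,\sigma^{-1}=\mathcal{D}_C$, and because conjugation respects products, $\Phi_\sigma$ restricts to a group automorphism of $\mathcal{D}_C$.

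Then I would conclude quickly. Using $\Phi_\sigma(\mathcal{D}_C d)=(\sigma\,\mathcal{D}_C\,\sigma^{-1})(\sigma d \sigma^{-1})=\mathcal{D}_C\,\Phi_\sigma(d)$, the image of the class of $d$ is exactly the class of $\Phi_\sigma(d)$, so $\Phi_\sigma$ descends to a well-defined map on the set of classes. This descended map is a bijection because $\Phi_{\sigma^{-1}}$ is its inverse, which preserves the \emph{number} of classes. Finally, since $\Phi_\sigma$ is injective on elements and carries $\mathcal{D}_C d$ onto $\mathcal{D}_C\,\Phi_\sigma(d)$, it restricts to a bijection between each class and its image, so the \emph{size} of every class is preserved as well.

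The hard part will not be the coset bookkeeping but verifying the two inputs that keep us inside the right set. Diagonality under permutation conjugation is elementary, but level preservation rests entirely on the cited Zeng result, so I would flag that dependence explicitly. I would also confirm that the relation in Lemma~\ref{Lem:easy1} is genuinely a partition into cosets, which requires $\mathcal{D}_C\cdot\mathcal{D}_k\subseteq\mathcal{D}_k$; this closure follows from the fact that the diagonal gates at a fixed level of $\mathcal{CH}$ are closed under multiplication, so that $\mathcal{D}_k$ is a union of $\mathcal{D}_C$-cosets and the argument above applies verbatim.
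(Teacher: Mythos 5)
Your proof is correct. Note that the paper itself states Lemma~\ref{Lem:easy2} (like Lemma~\ref{Lem:easy1}) without any proof, treating it as an easy observation, so there is no paper argument to compare against; your coset formulation is the natural way to make it rigorous. The three inputs you isolate are exactly the right ones and all hold: permutation conjugation reindexes diagonal entries, hence preserves diagonality; conjugation by a Clifford $\sigma$ is left multiplication by $\sigma$ and right multiplication by $\sigma^{-1}$, so the level-preservation result of \cite{Zeng2008} gives $\sigma\,\mathcal{CH}_k\,\sigma^{-1}=\mathcal{CH}_k$; and the closure $\mathcal{D}_C\cdot\mathcal{D}_k\subseteq\mathcal{D}_k$ needed for the classes to be cosets follows from Cui~\etal's result (cited in the paper, cf.\ Lemma~\ref{diagGroup} and Appendix~\ref{sec:diag}) that the diagonal gates $\mathcal{D}_k^n$ at level $k$ form a group containing $\mathcal{D}_C=\mathcal{D}_2^n$. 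With those in hand, the identity $\Phi_\sigma(\mathcal{D}_C d)=\mathcal{D}_C\,\Phi_\sigma(d)$ and invertibility via $\Phi_{\sigma^{-1}}$ deliver both conclusions (number of classes and size of each class) exactly as stated.
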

 
Now we will prove that generalized semi-Clifford gates on any number of qubits and for any level in $\mathcal{CH}$ can be partitioned into equivalence classes which are a product of the affine equivalence classes of permutations discussed above and a diagonal equivalence class whose elements are equivalent up to diagonal Clifford gates.

\begin{theorem}
    Any generalized semi-Clifford in $\mathcal{CH}_k$ is equivalent (up to left/right multiplication by Clifford gates) to some $\Tilde{\pi}\Tilde{d}$ where $\Tilde{\pi}$ is any affine-equivalent member in an affine equivalence class of permutations and $\Tilde{d}$ is any diagonal-Clifford-equivalent member in a diagonal equivalence class.  
\end{theorem}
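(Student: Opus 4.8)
The plan is to begin from the normal form $g = C_1 \pi d C_2$ of a generalized semi-Clifford gate and to peel off its structure one factor at a time, using at each stage the freedom of left/right multiplication by Clifford gates. First I would dispose of the outer Cliffords trivially: left-multiplying by $C_1^{-1}$ and right-multiplying by $C_2^{-1}$ shows that $g$ is Clifford-equivalent to $\pi d$, so it suffices to work with $\pi d$ and to establish that $\pi d \sim \tilde{\pi}\tilde{d}$ for the claimed representatives.

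Next I would realize the affine freedom on the permutation. Writing $\tilde{\pi} = L\pi R$ with $L,R$ affine permutations (hence Clifford), I would insert $RR^{-1}$ and use the identity $\pi d = \pi R (R^{-1} d R) R^{-1}$. The essential observation is that conjugation of a diagonal gate by any permutation matrix again yields a diagonal gate: if $d = \sum_x e^{i\theta_x}\ket{x}\bra{x}$ and $R$ realizes the affine map $x\mapsto Ax+b$, then $R^{-1} d R$ is diagonal with relabelled phases. Moreover, since $R$ is Clifford, left/right multiplication preserves the level, so $d' = R^{-1} d R$ is again a diagonal gate in $\mathcal{CH}_k$. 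Left-multiplying by $L$ and right-multiplying by $R$ (both Clifford) then gives $\pi d \sim L\pi R\, d' = \tilde{\pi} d'$.

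I would then realize the diagonal-Clifford freedom. Because all diagonal gates commute, multiplication by a diagonal Clifford $D$ acts on $d'$ identically from the left or the right, so right-multiplying $\tilde{\pi} d'$ by $D$ produces $\tilde{\pi} d' D = \tilde{\pi}(D d') = \tilde{\pi}\tilde{d}$ with $\tilde{d} = D d'$; by Lemma~\ref{Lem:easy1} every member of the diagonal equivalence class of $d'$ is reached in this way. Chaining the three Clifford equivalences yields $g \sim \tilde{\pi}\tilde{d}$ for an arbitrary affine representative $\tilde{\pi}$ and an arbitrary diagonal-Clifford representative $\tilde{d}$.

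The main obstacle — and the point where the two lemmas do real work — is that the affine freedom and the diagonal freedom are not independent: exercising the right-multiplication $R$ that moves $\pi$ within its affine class simultaneously conjugates $d$ to $d' = R^{-1}dR$, so one must confirm that the diagonal factor still lands in a well-defined equivalence class. This is exactly the content of Lemma~\ref{Lem:easy2}: conjugation by a Clifford permutation permutes the diagonal equivalence classes of $\mathcal{CH}_k$ while preserving their number and size, so the class of the diagonal part is intrinsic to the Clifford-equivalence class of $g$ and independent of which affine representative $\tilde{\pi}$ was selected. I would close by remarking that every step remained inside the generalized semi-Clifford family and, via level-preservation under Clifford multiplication, inside $\mathcal{CH}_k$, so no representative escapes the hierarchy.
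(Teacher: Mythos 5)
Your proof is correct and follows essentially the same route as the paper's: strip the outer Cliffords, realize the affine freedom via $\tilde{\pi} = L\pi R$ while conjugating the diagonal factor ($d' = R^{-1} d R$, still diagonal and at the same level since $R$ is Clifford), realize the diagonal freedom by multiplication with diagonal Cliffords, and invoke Lemmas~\ref{Lem:easy1} and~\ref{Lem:easy2} to handle the interaction between the two freedoms, exactly as the paper does with its $\phi_1, \phi_2$ and $d = d' d_C$ decomposition. One small imprecision in your closing remark: Lemma~\ref{Lem:easy2} shows that conjugation by a Clifford permutation \emph{permutes} the diagonal equivalence classes (preserving their number and size), not that it fixes each class, so the diagonal class paired with $\tilde{\pi}$ can genuinely depend on the chosen representative (e.g.\ conjugating $T \otimes I$ by a SWAP lands in the distinct class of $I \otimes T$); this does not damage the argument, since all that the theorem (and the paper's subsequent algorithm) needs is that the diagonal factor always lands in \emph{some} well-defined class, every member of which is reachable.
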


\begin{proof}
     In what follows, we will use $\approx$ to indicate Clifford equivalence. Recall that any generalized semi-Clifford can be written as $U = C_1 \pi d C_2$. Let $\pi = \phi_1 \pi' \phi_2^{-1}$ where $\phi$ are Clifford (affine) permutations. Here $\pi'$ is (by definition) a member of the same affine equivalence class as $\pi$. Then we have $U \approx C_1 \phi_1 \pi' \phi_2^{-1} \phi_2 d \phi_2^{-1} C_2$. Now, let $\Tilde{\pi} = (\phi_1 \pi' \phi_2^{-1})$ denote a member of an affine equivalence class of permutations. Note that $\phi_2 d\phi_2^{-1}$ is a diagonal matrix in the same level of $\mathcal{CH}$ as $d$. Let $d= d'd_C$ where $d_C$ denotes a diagonal Clifford gate and $d'$ is a member of the same diagonal Clifford equivalence class as $d$. Combining, we have $U \approx C_1 \Tilde{\pi} \phi_2 d' \phi_2^{-1}\phi_2 d_C \phi_2^{-1} C_2$. We denote a member of the conjugated diagonal equivalence class as: $\Tilde{d} = \phi_2 d'\phi_2^{-1}$. From Lemmas~\ref{Lem:easy1} and \ref{Lem:easy2} we know that conjugation by $\phi_2$ preserves the diagonal equivalence relation. This implies that iterating over members $d'$ and $\Tilde{d}$ (for fixed $\phi_2$) is equivalent. Finally, we have $U \approx C_1 \Tilde{\pi} \Tilde{d} \phi_2 d_C \phi_2^{-1} C_2  \approx C_L \Tilde{\pi} \Tilde{d} C_R \approx \Tilde{\pi} \Tilde{d}$ as claimed.  
\end{proof}

\begin{corollary}
    By checking one member of each equivalence class $\Tilde{\pi} \Tilde{d}$ we can infer the level of all members of the class and whether or not they are semi-Clifford as these properties are not changed by left or right multiplication by Clifford gates.  
\end{corollary}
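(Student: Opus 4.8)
The plan is to deduce the corollary from the Theorem together with two invariance facts about the relation $\approx$. First I would observe that the Theorem already does the main work: for a fixed generalized semi-Clifford $U \in \mathcal{CH}_k$ it asserts $U \approx \Tilde{\pi}\Tilde{d}$ for \emph{any} representative $\Tilde{\pi}$ of the affine class and \emph{any} representative $\Tilde{d}$ of the diagonal-Clifford class. Because $\approx$ is transitive, this ``for any'' upgrades to the statement that all members $\Tilde{\pi}\Tilde{d}$ of a single equivalence class are pairwise Clifford equivalent, each being equivalent to the common $U$. It therefore suffices to check that ``lying in level $k$ of $\mathcal{CH}$'' and ``being semi-Clifford'' are both invariants of $\approx$.

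For the level, I would simply cite the result of \cite{Zeng2008} that multiplication on the left or right by a Clifford gate leaves a gate's level in $\mathcal{CH}$ unchanged; applying it to the two Clifford factors relating any pair of equivalent gates shows they share a level. For the semi-Clifford property I would unwind its definition: a gate $V$ is semi-Clifford exactly when $A V B$ is diagonal for some Cliffords $A,B$ (equivalently, $V$ sends a maximal abelian Pauli subgroup to a maximal abelian Pauli subgroup). Given $W = C_1 V C_2$ with $C_1,C_2$ Clifford, substituting $V = C_1^{-1} W C_2^{-1}$ yields $A V B = (A C_1^{-1})\,W\,(C_2^{-1} B)$, which is diagonal, so $W$ is again Clifford-equivalent to a diagonal gate; the converse is identical by symmetry. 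Hence semi-Cliffordness is preserved in both directions under $\approx$. Combining these two invariances with the covering statement of the Theorem gives the corollary directly: testing a single representative $\Tilde{\pi}\Tilde{d}$ certifies the level and the semi-Clifford status of its whole class.

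I do not expect a genuine obstacle here, as the corollary is essentially a packaging of the Theorem. The one point deserving care is the quantifier bookkeeping noted above---that it is the ``any representative'' clause of the Theorem, rather than a separate argument, which guarantees that distinct members of the same class are mutually equivalent and not merely individually equivalent to $U$. Everything else reduces to the cited level-invariance and to reading off the definition of semi-Clifford.
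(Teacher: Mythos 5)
Your proposal is correct and follows the same route the paper takes: the paper states this corollary without a separate proof, folding the justification into the statement itself (level invariance under Clifford multiplication, cited earlier from \cite{Zeng2008}, plus invariance of the semi-Clifford property). Your write-up merely fills in the details the paper leaves implicit---the transitivity bookkeeping showing all class members are mutually equivalent, and the one-line verification that $W = C_1 V C_2$ is semi-Clifford whenever $V$ is---both of which are accurate.
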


At this point if we knew which equivalence classes, $\Tilde{\pi} \Tilde{d}$, were in $\mathcal{CH}_3 / \mathcal{CH}_2$ we could simply check a member of each class and verify whether or not it is semi-Clifford. From the discussion above, we can restrict $\Tilde{\pi} \in \mathcal{CH}_3 / \mathcal{CH}_2$ (or else $\Tilde{\pi} \Tilde{d}$ would be trivially semi-Clifford), but at this point we have not put any limits on $\Tilde{d}$ except that it must be in $\mathcal{CH}$. The next few lemmas place restrictions on $\Tilde{d}$. 

\begin{Lem}\label{piXd}
    If $\pi \in \mathcal{CH}_3$ and $d$ is a diagonal matrix, $\pi d \in \mathcal{CH}_3 \iff \pi \vec{X} d \vec{X} d^{-1} \pi^{-1} \in \mathcal{CH}_2$ for all $X$ Pauli strings $\vec{X}$.
\end{Lem}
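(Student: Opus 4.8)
The plan is to verify the defining condition for $\mathcal{CH}_3$ membership of $\pi d$ directly, reducing the infinitely many conjugation requirements to a check on Pauli generators and then splitting into the $Z$-type (diagonal) and $X$-type cases. First I would observe that, because $\mathcal{CH}_2$ is a group, the set $\{P \in \mathcal{P}_n : (\pi d)P(\pi d)^\dagger \in \mathcal{CH}_2\}$ is a subgroup of $\mathcal{P}_n$: it is closed under products since $(\pi d)(PQ)(\pi d)^\dagger = [(\pi d)P(\pi d)^\dagger][(\pi d)Q(\pi d)^\dagger]$, and analogously under inverses. Since every Pauli is, up to an irrelevant scalar phase, a product of an $X$-string and a $Z$-string (using $Y = iXZ$), it therefore suffices to check the condition $\pi d \in \mathcal{CH}_3$ on all $X$-strings $\vec{X}$ and all diagonal ($Z$-type) Paulis.

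For the diagonal Paulis the condition is automatic: since $d$ is diagonal it commutes with any $Z$-string $\vec{Z}$, so $(\pi d)\vec{Z}(\pi d)^\dagger = \pi \vec{Z}\pi^{-1}$, which lies in $\mathcal{CH}_2$ precisely because $\pi \in \mathcal{CH}_3$ and $\vec{Z}\in\mathcal{P}_n$. Thus only the $X$-strings carry content. For these I would exploit the involution $\vec{X}^2 = I$ to insert a resolution of the identity and factor, using $d^\dagger = d^{-1}$ and $\pi^\dagger = \pi^{-1}$,
\begin{equation*}
(\pi d)\vec{X}(\pi d)^\dagger = \pi d \vec{X} d^{-1}\pi^{-1} = \pi \vec{X}\big(\vec{X} d \vec{X} d^{-1}\big)\pi^{-1} = \big(\pi \vec{X}\pi^{-1}\big)\big(\pi \vec{X} d \vec{X} d^{-1}\pi^{-1}\big).
\end{equation*}
Here $\vec{X} d \vec{X} d^{-1}$ is a product of diagonal matrices, hence itself diagonal, so the second factor is exactly the operator appearing in the lemma, while the first factor $\pi\vec{X}\pi^{-1}$ lies in $\mathcal{CH}_2$ for the same reason as in the $Z$-type case.

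Finally I would invoke the group structure of $\mathcal{CH}_2$ once more to transfer membership between the product and a single factor: since $\pi\vec{X}\pi^{-1}\in\mathcal{CH}_2$, the full product $(\pi d)\vec{X}(\pi d)^\dagger$ lies in $\mathcal{CH}_2$ if and only if the remaining factor $\pi\vec{X} d\vec{X} d^{-1}\pi^{-1}$ does. Combining this with the automatic satisfaction of the $Z$-type conditions yields the claimed biconditional. The step I expect to require the most care is the factorization bookkeeping: one must confirm that $\vec{X} d\vec{X} d^{-1}$ is genuinely diagonal and, most importantly, make explicit that cancelling the Clifford factor $\pi\vec{X}\pi^{-1}$ is legitimate \emph{only} because $\mathcal{CH}_2$ is closed under multiplication and inversion. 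This is precisely where the group property at level two is essential, and it is also what forces the hypothesis $\pi\in\mathcal{CH}_3$ (so that $\pi\vec{X}\pi^{-1}$ is Clifford) rather than merely $\pi\in\mathcal{CH}$.
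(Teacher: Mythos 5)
Your proof is correct and follows essentially the same route as the paper: the same insertion of $\vec{X}\vec{X}$ to expose the diagonal factor $\vec{X} d \vec{X} d^{-1}$, the same extraction of the Clifford terms $\pi \vec{Z} \pi^{-1}$ and $\pi \vec{X} \pi^{-1}$ (valid because $\pi \in \mathcal{CH}_3$), and the same appeal to the group structure of $\mathcal{CH}_2$ to cancel them. The only difference is bookkeeping: the paper carries a general Pauli written as $\vec{Z}\vec{X}$ through a single chain of equalities, whereas you first reduce to generators via the subgroup observation and then treat the $Z$-type and $X$-type strings as separate cases --- an equivalent, slightly more explicit organization.
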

\begin{proof}
    By definition, $\pi d \in \mathcal{CH}_3 \iff \pi d \vec{Z} \vec{X} d^{-1} \pi^{-1} \in \mathcal{CH}_2$ for all Pauli strings $\vec{Z}, \vec{X}$. 
    
    Then,
    \begin{equation*}
    \begin{array}{ll}
        \quad \pi d \vec{Z} \vec{X} d^{-1} \pi^{-1} \\ 
        =\pi \vec{Z} \pi^{-1} \pi d \vec{X} d^{-1} \pi^{-1} \\
        =\pi \vec{Z} \pi^{-1} \pi \vec{X} \vec{X} d \vec{X} d^{-1} \pi^{-1} \\
        =(\pi \vec{Z} \pi^{-1}) (\pi \vec{X} \pi^{-1}) \pi \vec{X} d \vec{X} d^{-1} \pi^{-1}.
    \end{array}
    \end{equation*}
    The terms in parentheses are both Clifford since $\pi \in \mathcal{CH}_3$, and since the Cliffords form a group, we see that $\pi \vec{X} d \vec{X} d^{-1} \pi^{-1}$ must be Clifford if and only if $\pi d \in \mathcal{CH}_3$.
\end{proof}

We start with the classification of diagonal gates in $\mathcal{CH}$ \cite{Cui2017}. While conjugation by permutations does not necessarily preserve the level of a diagonal gate, it does preserve the spectrum (entries of the diagonal gate). We find it useful to introduce the diagonal matrix group: $Diag^n_r$. This is the group of all diagonal $2^n \times 2^n$ matrices with entries in the $2^r$th roots-of-unity. From \cite{Cui2017} these diagonal matrices are all in the Clifford Hierarchy. We can relate diagonal matrices from $\mathcal{CH}_k$ and $Diag^n_r$ as the following lemma shows:

\begin{Lem}\label{D3/D2} A matrix in $Diag_3 / Diag_2$ is in $\mathcal{D}_k/\mathcal{D}_2$ for some $k\ge 3$.
\end{Lem}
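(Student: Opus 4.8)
The plan is to combine two facts about a diagonal gate $D\in Diag_3$: that it sits at some finite level of the hierarchy, and that a genuine primitive $8$th-root-of-unity entry in its spectrum prevents it from being Clifford. Write $D=\sum_x e^{i\theta(x)}\ket{x}\bra{x}$ with each $\theta(x)\in\tfrac{\pi}{4}\mathbb{Z}$, so that the entries are $2^3$-th roots of unity. Since $D$ is diagonal it commutes with every $Z$-string, so the only conjugations that matter are with $X$-strings, and a direct computation gives $D X_j D^\dagger = X_j D'_j$, where $D'_j=\sum_x e^{i(\theta(x\oplus e_j)-\theta(x))}\ket{x}\bra{x}$ is again diagonal with entries in $\tfrac{\pi}{4}\mathbb{Z}$. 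Because left/right multiplication by Pauli (hence Clifford) gates does not change the level \cite{Zeng2008}, the recursive definition \eqref{CHdef} collapses to the statement that $D\in\mathcal{CH}_k$ iff every first-difference gate $D'_j\in\mathcal{CH}_{k-1}$; this is the general-level, diagonal analogue of Lemma~\ref{piXd}.

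First I would establish that $D$ lies at some finite level $k$. Iterating the differencing above lowers the degree of $\theta$, viewed as a multilinear function on $\mathbb{Z}_2^n$, by at least one each time, so after at most $n$ steps every iterated difference is a global phase, which is trivially Clifford. Hence $D\in\mathcal{CH}_k$ for some finite $k$, and in particular $D\in\mathcal{D}_k$. This is exactly the membership $Diag^n_3\subseteq\mathcal{CH}$ of \cite{Cui2017}, which I may simply cite rather than reprove.

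Next I would pin down $\mathcal{D}_2$, the diagonal Clifford gates. Applying the recursion once, $D$ is Clifford iff every $D'_j$ is a diagonal Pauli, i.e.\ has entries in $\{\pm1\}$ up to a global phase. This forces each first difference $\theta(x\oplus e_j)-\theta(x)$ to take only two values differing by $\pi$, which makes $\theta$ an at-most-quadratic phase polynomial valued in $\tfrac{\pi}{2}\mathbb{Z}$; equivalently, the entries of $D$ all lie in the $4$th roots of unity up to a single global phase. Thus, modulo global phase, $\mathcal{D}_2=Diag_2$. Since a diagonal gate in $\mathcal{CH}$ that is neither Pauli nor Clifford can only enter at level $\ge 3$, the bound $k\ge 3$ follows once we know $D\notin\mathcal{D}_2$.

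Finally, a matrix in $Diag_3/Diag_2$ has, by definition, at least one entry that is a primitive $8$th root of unity and cannot be brought into $Diag_2$ by any single global phase, so $D\notin\mathcal{D}_2$; combined with $D\in\mathcal{D}_k$ for finite $k$, this yields $D\in\mathcal{D}_k/\mathcal{D}_2$ with $k\ge 3$. The step I expect to be the real work, and the one most exposed to error, is the characterization $\mathcal{D}_2=Diag_2$: it hinges on global-phase bookkeeping, and one must check that the offending $8$th-root entries persist as non-Clifford content rather than being absorbable into an overall phase. One may instead quote the diagonal-Clifford classification of \cite{Cui2017} directly to sidestep this computation.
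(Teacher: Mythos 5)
Your route is, at its core, the same as the paper's: combine (i) $Diag_3 \subseteq \mathcal{CH}$, which both you and the paper take from \cite{Cui2017}, with (ii) the inclusion $\mathcal{D}_2 \subseteq Diag_2$, so that any element of $Diag_3/Diag_2$ is a diagonal gate lying at some finite level of $\mathcal{CH}$ but not Clifford, hence in $\mathcal{D}_k/\mathcal{D}_2$ for some $k\ge 3$. The only real difference is how (ii) is obtained: the paper reads it off from the generating set $\mathcal{D}_2=\langle S_i, CZ_{ij}\rangle$, whose generators have entries in the $4$th roots of unity, while you derive it from a first-difference computation on the phase function $\theta$. Your derivation of that inclusion is sound: Clifford forces each $D'_j$ to be a phase times a $Z$-string, which forces $\theta$ into $\tfrac{\pi}{2}\mathbb{Z}$ modulo a global phase.

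However, the claim you yourself flag as the crux --- ``modulo global phase, $\mathcal{D}_2 = Diag_2$'' --- is false, and the word ``equivalently'' in your derivation is exactly where it breaks. Having all entries in the $4$th roots of unity does \emph{not} force the first differences to be $\pm 1$-valued: the gate $CS=\Lambda^1(Z^{1/2})=\mathrm{diag}(1,1,1,i)$ lies in $Diag_2^2$, yet its first difference with respect to the second qubit has entries in $\{1,\pm i\}$, and indeed $CS\in\mathcal{CH}_3/\mathcal{CH}_2$ (by the appendix formula, $\Lambda^m(Z^{1/2^l})$ with $l=m=1$ sits at level $l+m=3$). So $Diag_2 \supsetneq \mathcal{D}_2$ for $n\ge 2$; only the inclusion holds, which is all the paper claims or needs. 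Fortunately your final deduction also uses only that inclusion ($D\notin Diag_2 \Rightarrow D\notin\mathcal{D}_2$), so your proof of the lemma survives once the false equality is deleted. One smaller imprecision: differencing does not lower the degree of $\theta$ by one per step; it doubles the top-degree coefficients modulo $2\pi$ (the one-qubit $T$ gate needs three differencings, not one, to reach a global phase), so for the finite-level step you should indeed cite \cite{Cui2017} as you propose rather than rely on that sketch.
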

\begin{proof}
 Additional properties of $\mathcal{D}_k$ and $Diag_k$ are discussed in Appendix \ref{sec:diag}.  

 The intuition for this Lemma is seen in Figure \ref{fig:diagGroups}. All entries in $Diag_3/Diag_2$ (third row of \ref{fig:diagGroups}(b)) are in at least $\mathcal{D}_k/\mathcal{D}_{2}$ for some $k\ge 3$. 

 Recall that $Diag_2$ are diagonal matrices with entries from the $2^2 = 4$th roots of unity\footnote{By fixing the global phase, we can set the upper left entry of these matrices to 1.} $\mathcal{D}_2$ (the diagonal Clifford group) is generated by $\langle S_i, CZ_{ij}\rangle$. We can then see that $Diag_2 \supseteq \mathcal{D}_2$. Since $Diag_2$ contains all diagonal Clifford elements any element of $Diag_3/Diag_2$ must therefore be non-Clifford which is that same as being a member of  $\mathcal{D}_k/\mathcal{D}_2$ for some $k\ge 3$. 
\end{proof}

From Lemmas \ref{piXd} and \ref{D3/D2} we can exclude diagonal equivalence classes $\Tilde{d}$ if they contain a representative $d$ such that $\vec{X} d \vec{X} d^{-1} \in Diag^n_3 / Diag^n_2$ for some $\vec{X}$, since this implies that $d$ cannot be in $\mathcal{CH}_3$. 

Specifically, we see that there exists $\vec{X}$s such that $X \sqrt{T} X \sqrt{T^\dagger} \in Diag^n_3 / Diag^n_2$ and 
\begin{equation*}
\begin{array}{cccc}
\Qcircuit @C=0.5em @R=1.3em {
& \ctrl{1} & \gate{X} & \ctrl{1}  & \qw \\
& \gate{T} &  \qw  &  \gate{T^\dagger} & \qw\\
} 
\end{array}
= 
\begin{array}{cccc}
\Qcircuit @C=0.5em @R=1.3em {
& \gate{X}        & \qw & \ctrl{1}  & \qw \\
& \gate{T} & \qw &  \gate{S^\dagger} & \qw\\
}  
\end{array}
\in Diag^n_3 / Diag^n_2.
\end{equation*}

At this point, we find that there are at most $2^{20}$ diagonal equivalence classes to consider, which we have organized in Fig.~\ref{fig:classes1}. 

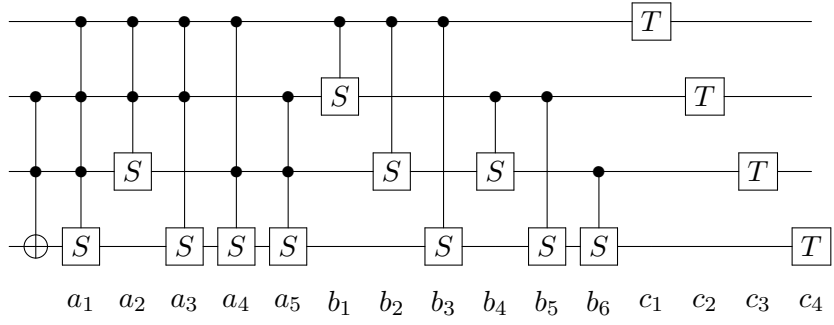
\begin{figure}[!h]
\begin{equation*}
\begin{array}{ccc}
\Qcircuit @C=0.5em @R=1.3em {
& \qw & \ctrl{1} & \ctrl{1}  & \ctrl{1} & \ctrl{2} &\qw       & \ctrl{1} & \ctrl{2} & \ctrl{3} & \qw     & \qw & \qw & \gate{T} & \qw & \qw & \qw \\
& \ctrl{1} & \ctrl{1} & \ctrl{1}  & \ctrl{2} & \qw      & \ctrl{1} & \gate{S} & \qw      & \qw      & \ctrl{1} & \ctrl{2} & \qw & \qw & \gate{T} & \qw & \qw \\
& \ctrl{1} & \ctrl{1} & \gate{S}  & \qw      & \ctrl{1} & \ctrl{1} & \qw      & \gate{S} & \qw      & \gate{S} & \qw & \ctrl{1} & \qw & \qw & \gate{T} & \qw \\
& \targ & \gate{S} &  \qw      & \gate{S} & \gate{S} & \gate{S} & \qw      & \qw      & \gate{S} & \qw & \gate{S} & \gate{S} & \qw & \qw & \qw & \gate{T}\\
& & a_1 & a_2 & a_3 & a_4 & a_5 & b_1 & b_2 & b_3 & b_4 & b_5 & b_6 & c_1 & c_2 & c_3 & c_4 \\
}
\end{array}
\end{equation*}
\caption{\label{fig:classes1}Here $a_i \in \{0,1,2,3\}$ and $b_i, c_i \in \{0,1\}$ represent the exponent of the gate above. $b_i, c_i$ only require two values since the other powers are equivalent up to Clifford diagonal gates. We have included a permutation from the relevant affine equivalence class to illustrate how the equivalence classes interact.}
\end{figure}

Certain permutations, $\pi$, can further restrict the number of diagonal equivalence classes we need to consider as the following lemma shows:

\begin{Lem}\label{noC3}
All diagonal gates, $d_{\Bar{T}}$, in $\mathcal{CH}_3$ which have no support on the targets of a permutation, $\pi$, can be ignored when determining whether $\pi d \in \mathcal{CH}_k$.
\end{Lem}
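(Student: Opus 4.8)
The plan is to reduce the question ``$\pi d \in \mathcal{CH}_3$'' to the same question with $d$ replaced by its restriction $d_T$ to the target qubits of $\pi$, thereby showing that the off-target factor never matters. First I would factor the diagonal gate by support as $d = d_T d_{\bar{T}}$, where $d_T$ is supported on the targets of $\pi$ and $d_{\bar{T}}$ on the complementary set $\bar{T}$. The decisive structural observation is that $\pi$ preserves the basis values of the non-target qubits (those are, by definition, the qubits $\pi$ never flips), so the diagonal gate $d_{\bar{T}}$, which depends only on those qubits, commutes with $\pi$: $[\pi, d_{\bar{T}}] = 0$. The same reasoning applies to any diagonal operator supported on $\bar{T}$.

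Next I would feed $d = d_T d_{\bar{T}}$ into the commutator criterion of Lemma~\ref{piXd}. Writing an arbitrary Pauli-$X$ string as $\vec{X} = \vec{X}_T \vec{X}_{\bar{T}}$ and using that operators on disjoint qubit sets commute and that $\vec{X}_{\bar{T}}^2 = I$, the group commutator factorizes across the target/non-target cut as $\vec{X} d \vec{X} d^{-1} = A_T A_{\bar{T}}$, with $A_T = \vec{X}_T d_T \vec{X}_T d_T^{-1}$ supported on $T$ and $A_{\bar{T}} = \vec{X}_{\bar{T}} d_{\bar{T}} \vec{X}_{\bar{T}} d_{\bar{T}}^{-1}$ supported on $\bar{T}$. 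Because conjugating a diagonal matrix by a bit-flip string produces another diagonal matrix, $A_{\bar{T}}$ is itself diagonal; and since $d_{\bar{T}} \in \mathcal{CH}_3$ by hypothesis, $d_{\bar{T}} \vec{X}_{\bar{T}} d_{\bar{T}}^{-1} \in \mathcal{CH}_2$, so $A_{\bar{T}} = \vec{X}_{\bar{T}}(d_{\bar{T}} \vec{X}_{\bar{T}} d_{\bar{T}}^{-1})$ is a product of Cliffords and hence Clifford.

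Now I would conjugate by $\pi$. Since $A_{\bar{T}}$ is diagonal and supported on the non-targets, it commutes with $\pi$ exactly as $d_{\bar{T}}$ does, which gives $\pi\,\vec{X} d \vec{X} d^{-1}\,\pi^{-1} = (\pi A_T \pi^{-1})\,A_{\bar{T}}$. As $A_{\bar{T}}$ is Clifford and the Cliffords form a group, the right-hand side lies in $\mathcal{CH}_2$ if and only if $\pi A_T \pi^{-1}$ does. Finally, since $\vec{X}_{\bar{T}}$ drops out of $\vec{X} d_T \vec{X} d_T^{-1}$ (it commutes with $d_T$ and squares to the identity), ranging over all Paulis $\vec{X}$ is the same as ranging over all target Paulis $\vec{X}_T$, so Lemma~\ref{piXd} applied to $d_T$ identifies the condition ``$\pi A_T \pi^{-1} \in \mathcal{CH}_2$ for all $\vec{X}$'' with ``$\pi d_T \in \mathcal{CH}_3$.'' Chaining these equivalences yields $\pi d \in \mathcal{CH}_3 \iff \pi d_T \in \mathcal{CH}_3$, i.e. $d_{\bar{T}}$ may be dropped.

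The step I expect to require the most care is justifying the commutation $[\pi, d_{\bar{T}}] = 0$ cleanly: it rests on pinning down exactly what ``the targets of $\pi$'' means (the qubits whose basis values $\pi$ can change) and on checking that for the relevant controlled permutations the non-target qubits are genuinely fixed, so that any diagonal gate on them commutes with $\pi$. The remainder is bookkeeping with disjoint-support commutativity and the Clifford group property. I would also flag that, through its reliance on Lemma~\ref{piXd}, the argument is tailored to $k=3$; extending the statement to general $\mathcal{CH}_k$ would require the analogous higher-level commutator criterion in place of Lemma~\ref{piXd}.
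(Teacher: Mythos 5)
Your overall strategy---split $d=d_T d_{\bar T}$, feed the group commutator into Lemma~\ref{piXd}, use that a diagonal gate supported off the targets commutes with $\pi$, and finish with the group property of the Cliffords---is the same as the paper's. But there is a genuine gap: you assume the retained factor $d_T$ is supported \emph{only} on the target wires $T$, i.e.\ that $d$ factorizes cleanly across the $T/\bar T$ cut. Neither the lemma nor its use in the paper permits this: in the reduction from Fig.~\ref{fig:classes1} to Fig.~\ref{fig:classes2}, the gates that survive are controlled-$S$ gates whose controls sit on non-target wires and whose $S$ sits on the target wire, so $d_T$ straddles both sets. Two of your steps then fail. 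First, the factorization $\vec{X} d \vec{X} d^{-1} = A_T A_{\bar T}$ with $A_T = \vec{X}_T d_T \vec{X}_T d_T^{-1}$ is wrong in general, because $\vec{X}_{\bar T}$ does not commute with such a $d_T$. Second, and for the same reason, your closing claim that ranging over all Paulis $\vec X$ is the same as ranging over all target Paulis $\vec X_T$ is false; Lemma~\ref{piXd} applied to $d_T$ genuinely requires all Pauli $X$ strings. As written, your argument establishes only the special case where $d$ is a product of a gate on $T$ and a gate on $\bar T$, which is too weak to justify the paper's application of the lemma.

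The repair is small and lands exactly on the paper's proof: do not split the Pauli string acting on $d_T$. Since diagonal matrices commute,
$$\vec X d \vec X d^{-1} = \bigl(\vec X d_T \vec X d_T^{-1}\bigr)\bigl(\vec X d_{\bar T}\vec X d_{\bar T}^{-1}\bigr) = \bigl(\vec X d_T \vec X d_T^{-1}\bigr)\bigl(\vec X_{\bar T} d_{\bar T}\vec X_{\bar T} d_{\bar T}^{-1}\bigr),$$
where only the second factor---diagonal, supported on $\bar T$, and Clifford because $d_{\bar T}\in\mathcal{CH}_3$---needs to split off. That factor commutes with $\pi$ (your observation about non-target bit values being fixed, which is correct once $\pi$ is written as a product of multi-controlled NOTs with targets in $T$), so conjugation gives $\pi \vec X d \vec X d^{-1}\pi^{-1} = \bigl(\pi \vec X d_T \vec X d_T^{-1}\pi^{-1}\bigr)\bigl(\vec X_{\bar T} d_{\bar T}\vec X_{\bar T} d_{\bar T}^{-1}\bigr)$, and the left side is Clifford iff $\pi \vec X d_T \vec X d_T^{-1}\pi^{-1}$ is. Applying Lemma~\ref{piXd} over \emph{all} $\vec X$ then yields $\pi d\in\mathcal{CH}_3 \iff \pi d_T\in\mathcal{CH}_3$, with no restriction on where $d_T$ is supported; this is precisely the paper's computation, which it writes as $\pi d \vec X d^{-1}\pi^{-1}$ with the factor $d_{\bar T}\vec X d_{\bar T}^{-1}\vec X$ pulled leftward through $\pi d_T$.
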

\begin{proof}
    Let $\pi$ be a product of $C^n(X) (n\ge 2)$ gates. Up to multiplication by Clifford permutations this is a fully general permutation. Let $T$ denote the wires (qubits) which have support on a target of $\pi$ and let $\bar{T}$ denote all other wires. For a generalized semi-Clifford gate, $\pi d=\pi d_T d_{\bar{T}}$, with a diagonal component trivial support on $T$, $d_{\Bar{T}}$, we can check $d_{\Bar{T}}$ membership in $\mathcal{CH}$ separately for all $X$ as the following identity shows:
    \begin{align*} & \quad \pi d_{T} d_{\Bar{T}} \vec{X} d_{\Bar{T}}^{-1} d_T^{-1} \pi^{-1} \\ &= \pi d_T (d_{\Bar{T}} \vec{X} d_{\Bar{T}}^{-1} \vec{X}) \vec{X}d_T^{-1}\pi^{-1} \\ &=(d_{\Bar{T}} \vec{X} d_{\Bar{T}}^{-1} \vec{X}) \pi d_T \vec{X} d_T^{-1} \pi^{-1}.
    \end{align*}
    Then, if $d_{\Bar{T}} \in \mathcal{CH}_3$, then $(d_{\Bar{T}} \vec{X} d_{\Bar{T}}^{-1} \vec{X}) \in \mathcal{CH}_2$ and for all $X$, $\pi d \in \mathcal{CH}_k$ iff $\pi d_T \in \mathcal{CH}_k$.
\end{proof}

Now, we apply Lemma~\ref{noC3} to the diagonal equivalence classes shown in Fig.~\ref{fig:classes1} for the case with a single target on the bottom most qubit. 

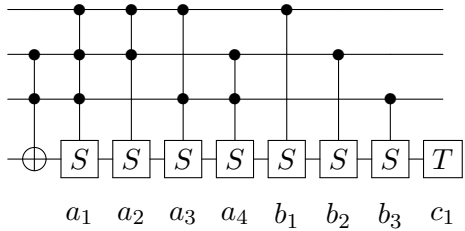
\begin{figure}[!h]
\begin{equation*}
\begin{array}{ccc}
\Qcircuit @C=0.5em @R=1.3em {
& \qw & \ctrl{1} & \ctrl{1} & \ctrl{2} &\qw        & \ctrl{3} & \qw & \qw & \qw \\
& \ctrl{1} & \ctrl{1} & \ctrl{2} & \qw      & \ctrl{1} & \qw      & \ctrl{2} & \qw & \qw \\
& \ctrl{1} & \ctrl{1} & \qw      & \ctrl{1} & \ctrl{1} & \qw      & \qw & \ctrl{1} & \qw \\
& \targ & \gate{S} & \gate{S} & \gate{S} & \gate{S} & \gate{S} & \gate{S} & \gate{S} & \gate{T}\\
& & a_1 & a_2 & a_3 & a_4 & b_1 & b_2 & b_3 & c_1 \\
}
\end{array}
\end{equation*}
\caption{\label{fig:classes2}Here $a_i \in \{0,1,2,3\}$ and $b_i, c_i \in \{0,1\}$ represent the exponent of the gate above. $b_i, c_i$ only require two values since the other powers are equivalent up to Clifford diagonal gates. We have $4^4 2^4 = 4096$ classes to check. Note, we have included a permutation from the relevant affine equivalence class to illustrate how the equivalence classes interact.}
\end{figure}


We can express the diagonal matrix as
\begin{multline*}
d=\text{diag}(1, \; c_1, \; 1, \; b_3 c_1, \;
    1, b_2 c_1, 1, a_4 b_2 b_3 c_1, 1,\; b_1 c_1,\; 1,\; a_3 b_1 b_3 c_1,\\
    1,\; a_2 b_1 b_2 c_1,\; 1, \; a_1 a_2 a_3 a_4 b_1 b_2 b_3 c_1)
\end{multline*}
where $a_i \equiv e^{i\frac{a_i 2\pi}{4}}$, $b_i \equiv e^{i\frac{b_i 2\pi}{4}}$, and $c_i \equiv e^{i\frac{c_i 2\pi}{8}}$. For all choices of $a_i, b_i, c_i$ above (except for setting all variables to zero), $d$ is in $\mathcal{CH}_k/\mathcal{CH}_2$. Almost all choices of $a_i, b_i, c_i$ when multiplied by the $CCX$ permutation gate (or any permutation gate in this affine equivalence class) are not semi-Clifford except for $a_4 = 2$ and all other variables set to zero. This case is semi-Clifford since $CCX\cdot CCZ = CCiY$ which is diagonalized by Clifford gates. Excluding this case we can now iterate over all equivalence classes in the algorithm below. 

For completeness, we list some additional ideas for cutting down on the total number of equivalence classes.  

\begin{Lem}
$\pi d \in \mathcal{CH}_k \iff \pi d^{-1} \in \mathcal{CH}_k$.    
\end{Lem}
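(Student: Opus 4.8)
The plan is to realize the map $d\mapsto d^{-1}$ (with $\pi$ held fixed) as entrywise complex conjugation of the entire matrix, and then to show that entrywise complex conjugation preserves membership in $\mathcal{CH}_k$. Write $\overline{\,\cdot\,}$ for the entrywise complex-conjugation map on $2^n\times 2^n$ matrices. This map is an involution that respects matrix products, $\overline{AB}=\bar A\,\bar B$, and commutes with the adjoint, $\overline{A^\dagger}=(\bar A)^\dagger$; it also sends unitaries to unitaries. Since a permutation matrix $\pi$ has real $\{0,1\}$ entries we have $\bar\pi=\pi$, and since $d=\mathrm{diag}(e^{i\theta_j})$ is a diagonal unitary we have $\bar d=\mathrm{diag}(e^{-i\theta_j})=d^{-1}$. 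Hence $\overline{\pi d}=\bar\pi\,\bar d=\pi d^{-1}$, so the lemma reduces to the claim $\pi d\in\mathcal{CH}_k\iff\overline{\pi d}\in\mathcal{CH}_k$.

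First I would dispatch the base case: the single-qubit Paulis $X,Z$ are real while $\bar Y=-Y$, so conjugation sends every Pauli string to $\pm$ a Pauli string and therefore maps $\mathcal{P}_n=\mathcal{CH}_1$ onto itself. Next I would show by induction on $k$ that $U\in\mathcal{CH}_k\iff\bar U\in\mathcal{CH}_k$. Assuming the statement at level $k-1$, take $U\in\mathcal{CH}_k$ and any Pauli $P$; conjugating the defining relation $UPU^\dagger\in\mathcal{CH}_{k-1}$ gives $\overline{UPU^\dagger}=\bar U\,\bar P\,(\bar U)^\dagger\in\mathcal{CH}_{k-1}$ by the inductive hypothesis. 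As $P$ ranges over $\mathcal{P}_n$ so does $\bar P$, so $\bar U\,P'\,(\bar U)^\dagger\in\mathcal{CH}_{k-1}$ for every Pauli $P'$, which is exactly $\bar U\in\mathcal{CH}_k$. Because $\overline{\,\cdot\,}$ is an involution, the converse direction comes for free, and so does the reverse implication of the lemma itself.

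The only point that needs care is the algebraic foundation of the induction: one must confirm that conjugation genuinely stabilizes the Pauli group (it does, up to the harmless sign on $Y$) and that membership in $\mathcal{CH}_k$ is defined purely by the conjugation relation in Eq.~\eqref{CHdef}, so that applying $\overline{\,\cdot\,}$ to both sides is legitimate at every level. Since $\overline{\,\cdot\,}$ preserves unitarity and maps Pauli strings to Pauli strings, nothing obstructs the step. A level-specific alternative using the identity behind Lemma~\ref{piXd} (relating $\vec X d\vec X d^{-1}$ to $\vec X d^{-1}\vec X d$) is also available but is messier and restricted to $k=3$; the conjugation argument is cleaner and uniform in $k$. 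Combining the reduction $\overline{\pi d}=\pi d^{-1}$ with the closure of $\mathcal{CH}_k$ under $\overline{\,\cdot\,}$ yields $\pi d\in\mathcal{CH}_k\iff\pi d^{-1}\in\mathcal{CH}_k$, as claimed.
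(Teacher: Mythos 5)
Your proposal is correct and follows exactly the paper's route: the paper also reduces the lemma to the fact that entrywise complex conjugation preserves the level in the Clifford Hierarchy (noting $\overline{\pi d}=\pi d^{-1}$), though it simply cites an external reference for that fact rather than proving it. Your inductive argument supplies precisely the details the paper delegates to the citation, so it is the same approach, just self-contained.
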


This is a simple corollary of the theorem that complex conjugation preserves the level in the Clifford hierarchy. See \cite{SEInverses} for details.

This lemma implies that we do not have to check all values of $a_i$. While useful, it does not significantly reduce the total number of gates to check.

Additionally, since we know that $\mathcal{CH}_3^{n=3}$ is semi-Clifford, if all gates ($\pi$ and the diagonal gates) have support on only the same 3 qubits, we do not need to check this case. This fact can also be used to exclude the $a_4 = 2$ and all other variables set to zero case discussed above. In fact, any combination of $a_4, b_2, b_3, \mbox{ and } c_1$ with all other variables set to zero can be excluded.

\begin{algorithm}
  \caption{Check All the Gates }\label{euclid}
  \begin{algorithmic}[1]
      \For{\texttt{$\pi \in \mathcal{CH}_3^{\pi}$}} \Comment{Loop over permutations in $\mathcal{CH}_3^{\pi}$}
        \For{\texttt{$d \in (\vec{a},\vec{b},\vec{c})$}} \Comment{Loop over diagonal matrices defined above}
            \State $Q \gets True$
            \For{\texttt{$\vec{X} \in \text{Pauli X Strings}$}}
        \State \texttt{Check if $\pi \vec{X} d \vec{X} d^{-1} \pi^{-1}$ is Clifford}
        \If{\texttt{Not Clifford}}
            \State $Q \gets False$
            \State \texttt{Break out of inner most For loop}
        \EndIf
        \EndFor
        \If{$Q$\texttt{ is True} }
        \Return{\texttt{non-semi-Clifford gate found!}}
        \EndIf
        \EndFor
      \EndFor\\
      \Return{\texttt{All gates are semi-Clifford}}
  \end{algorithmic}
\end{algorithm}

We implemented the algorithm above and find that all gates in $\mathcal{CH}_3^{n=4}$ are semi-Clifford.

We used the classification of 4-qubit affine equivalence classes of permutations in the Clifford hierarchy (see Section~\ref{Perms4Qs}) to ensure that our algorithm iterated over all relevant equivalence classes. Since we started this work, He~\etal \cite{He2024} have classified all $k=3$ permutations in the Clifford Hierarchy. This result could be leveraged to check if $n=5, k=3$ and $n=6,k=3$ are strictly semi-Clifford, however, the number of diagonal equivalence classes to consider grows rapidly with $n$. We believe that ideas from this section combined with ample computational resources could be used to check all $n=5$ equivalence classes. $n=6$ likely requires improved theoretical understanding to eliminate classes.        

\section{Cycle Structures}
\begin{Def}
A \textbf{cycle structure} for a permutation, $P$, is a matrix comprised of all states (expressed as binary column vectors) that $P$ does not fix. 
\end{Def}
When a cycle structure has only one non-trivial cycle of length $k$, the cycle structure has $k$ columns and we refer to it as a $k$ cycle structure. Likewise, when a cycle structure has non-trivial cycles of lengths: $(k_1, k_2,...k_m)$, it has $\sum_{i=1}^m k_i$ columns and we refer to it as a $(k_1, k_2,...k_m)$ cycle structure. 

A cycle structure is unique up to cyclic permutation of states within a cycle and up to changing the order of different disjoint cycles. A {\em canonical cycle notation} can be used to uniquely describe these cycles, which is similar to that used in finite group theory, where:
\begin{itemize}
    \item The largest (interpreting each binary vector as an integer) element of each disjoint cycle is listed first.
    \item Disjoint cycles are sorted in descending order by their first element.
    \item Fixed points, or one-cycles, are omitted.\footnote{Some canonical cycle notations in finite group theory require you write out all one-cycles, however this would be cumbersome when you consider the size of the state space fixed by our permutations of interest.}
\end{itemize}

For example, the cycle structure written in matrix form as 
$$\left[\begin{matrix}
    1 & 0 & 1 \\
    0 & 1 & 1 \\
    0 & 0 & 0 \\
    0 & 0 & 0 
\end{matrix} \ \right|  \left.  
\begin{matrix}
    0 & 1 \\
    0 & 1 \\
    0 & 1 \\
    1 & 0 
\end{matrix} \ \right|  \left.  
\begin{matrix}
    1 & 1 \\
    1 & 0 \\
    1 & 1 \\
    1 & 0 
\end{matrix} \right],$$
would be called a $(3,2,2)$-cycle, and can be represented in canonical cycle notation as

$$ (15,5)(8,7)(3,1,2). $$

Each column of any given cycle structure must be distinct. Columns need not be linearly independent. From this property it is easy to show that a $k$ cycle structure is at least rank $\lc\log_2(k)\rc$.

We use cycle structures, particularly their matrix forms, to represent permutations because we are interested in affine equivalence of permutations, and conjugation by affine permutations has a particularly simple action on cycle structures in this form.

\begin{figure}
\begin{equation*}
\begin{bmatrix}
1 & 0 & 0 & 0 & 0 & 0 & 0 & 0 \\
0 & 1 & 0 & 0 & 0 & 0 & 0 & 0 \\
0 & 0 & 1 & 0 & 0 & 0 & 0 & 0 \\
0 & 0 & 0 & 1 & 0 & 0 & 0 & 0 \\
0 & 0 & 0 & 0 & 1 & 0 & 0 & 0 \\
0 & 0 & 0 & 0 & 0 & 1 & 0 & 0 \\
0 & 0 & 0 & 0 & 0 & 0 & 0 & 1 \\
0 & 0 & 0 & 0 & 0 & 0 & 1 & 0 \\
\end{bmatrix} 
\Leftrightarrow 
\begin{array}{ccc}
\Qcircuit @C=0.5em @R=1.3em {
& \ctrl{1} & \qw \\
& \ctrl{1}& \qw \\
& \targ& \qw \\
}
\end{array} 
\Leftrightarrow 
\begin{bmatrix}
1 & 1 \\
1 & 1 \\
0 & 1 \\
\end{bmatrix}
\end{equation*}
\caption{Comparison of three different representations of the Toffoli gate. On the left is the canonical unitary representation, in the middle is the quantum gate representation, and on the right is our cycle structure representation.}
\end{figure}

Below we associate a binary matrix with a cycle structure and use the term cycle structure to describe both the product of disjoint cycles and the associated binary matrix.

\begin{theorem}\label{kcycles}
All length-$k$ cycles structures on $m > k$ qubits are affine equivalent to a controlled permutation. Furthermore, there are only a constant (dependent on $k$, independent of $m$) number of affine equivalence classes of permutations corresponding to length-$k$ cycle structures. All others can be obtained by adding controls to permutations in this set.  
\end{theorem}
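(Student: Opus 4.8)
The plan is to exploit the explicit action of affine conjugation on the cycle-structure matrix. If $P$ realizes the $k$-cycle $(s_1, s_2, \ldots, s_k)$ with $s_i \in \mathbb{F}_2^m$, assemble these states as the columns of a matrix $M = [\,s_1 \mid \cdots \mid s_k\,]$. Conjugating $P$ by the affine permutation $g:x\mapsto Ax+b$ (with $A \in \mathrm{GL}_m(\mathbb{F}_2)$, $b \in \mathbb{F}_2^m$) produces the $k$-cycle $(g(s_1),\ldots,g(s_k))$, i.e.\ it sends $M \mapsto AM + b\,\mathbf{1}^\top$, where $\mathbf{1}$ is the all-ones column. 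Thus affine conjugation lets us (i) translate any chosen state to $\vec 0$ by choosing $b$, and (ii) apply an arbitrary invertible row operation via $A$. Since conjugation is the special case $L = g$, $R = g^{-1}$ of affine equivalence and preserves cycle type, every transformation we make keeps the permutation a genuine length-$k$ cycle.

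First I would establish the controlled structure. Translate $s_1$ to $\vec 0$; the remaining $k-1$ states then span a linear subspace of dimension $r \le k-1$, so the affine hull of $\{s_1,\ldots,s_k\}$ has dimension $r \le k-1$. Choosing $A$ to send that span into the first $r$ coordinates, all $k$ cycled states acquire zeros in coordinates $r+1,\ldots,m$. Because $m > k$ forces $r \le k-1 < m$, at least one coordinate --- say the last qubit --- is identically $0$ on every non-fixed state. Hence $P$ fixes every state in which that qubit equals $1$, which is exactly the statement that $P$ is anti-controlled on that qubit; conjugating by a Pauli $X$ (an affine permutation) turns the anti-control into a control. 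This proves that every length-$k$ cycle structure on $m > k$ qubits is affine equivalent to a controlled permutation, and more precisely to one whose non-trivial action is confined to an $r \le k-1$ dimensional subspace with the remaining $m-r$ qubits serving as controls --- which is the sense in which ``all others are obtained by adding controls.''

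Next I would bound the number of classes. After the reduction, the essential data is a length-$k$ cycle on the $\le k-1$ active coordinates, equivalently a choice of $k$ distinct points of $\mathbb{F}_2^{k-1}$ together with a cyclic ordering. This is a finite set whose size depends only on $k$, and the relevant affine group $\mathrm{AGL}_{k-1}(\mathbb{F}_2)$ acting on it is finite, so the number $N(k)$ of affine-conjugation classes of length-$k$ cycles supported on $\le k-1$ qubits is a constant depending only on $k$. Padding such a core cycle with control qubits (extend the affine maps by the identity on the new coordinates) gives a surjection from these $N(k)$ core classes onto the classes on $m$ qubits, so the latter number is at most $N(k)$ for every $m > k$.

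The main obstacle is the claim of exact independence of $m$, which has two facets. First, although I have worked with affine \emph{conjugation}, the theorem's equivalence is the full two-sided relation $P \mapsto L P R$, so I must verify that a general affine equivalence between two length-$k$ cycles does not produce fewer classes than conjugation in an $m$-dependent way, using that both sides share cycle type and that the non-fixed support is an affine-structured set. Second, I must show the surjection above stabilizes: two padded core cycles should be affine equivalent on $m$ qubits if and only if their cores are affine equivalent on $k-1$ qubits. The ``if'' direction is the padding argument; the ``only if'' direction is delicate, since the extra control qubits give an affine map more room to act. I expect to prove it by showing that any affine equivalence of two controlled length-$k$ cycles must preserve the subspace spanned by the non-fixed states --- that subspace being intrinsically determined as the affine hull of the support --- and therefore restricts to an equivalence of the cores. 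Making this restriction argument rigorous for the two-sided relation is where the real work lies.
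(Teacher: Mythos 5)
Your proposal is correct and is essentially the paper's own argument in different clothing: the paper acts on the cycle-structure matrix by elementary row operations (CNOTs), row complementation (Pauli $X$), and cyclic column shifts, puts the matrix into reduced row echelon form (its Lemma~\ref{rreLem}) so that $m>k$ rows force a constant row, identifies a constant row with a control/anticontrol (its Lemma~\ref{const}), and bounds the classes by the finitely many classes of $r\times k$ matrices with $r\le k$; your translation-plus-span argument and padding surjection are the same linear algebra, phrased with $M\mapsto AM+b\,\mathbf{1}^\top$ instead of row reduction. The one substantive comment: the ``main obstacle'' you flag at the end is not actually needed for the theorem as stated. Conjugation is a special case of two-sided affine equivalence, so conjugation classes refine affine equivalence classes; a coarser relation can only merge classes, so your bound $N(k)$ on conjugation classes automatically bounds the number of affine equivalence classes, and ``conjugate to a controlled permutation'' already implies ``affine equivalent to a controlled permutation.'' The delicate ``only if''/injectivity question (exact stabilization of the count as $m$ grows) is a strictly stronger statement than this theorem; the paper itself defers it to Lemma~\ref{lem:CS3} and gives it only a brief justification there, so you should not regard its absence as a gap in your proof of this statement.
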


In order to prove this, we introduce some useful lemmas:
\begin{Lem}\label{rreLem}
A binary matrix can be put into reduced row echelon form using elementary row operations. The rank of a matrix in reduced row echelon form is the number of pivots (leading ones in each row). 
\end{Lem}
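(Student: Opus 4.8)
The plan is to recognize this as Gaussian elimination carried out over the field $\mathbb{F}_2$, where the only admissible elementary row operations are (i) interchanging two rows and (ii) adding one row to another; scaling a row by a nonzero scalar is vacuous here, since $1$ is the only nonzero element of $\mathbb{F}_2$. First I would establish existence of the reduced row echelon form (RREF) by an inductive/algorithmic argument on the number of rows. If the matrix is entirely zero it is already in RREF with no pivots, giving the base case. Otherwise, locate the leftmost column containing a nonzero entry, use a row swap to bring such a $1$ into the topmost available row (this is a pivot), and then add that pivot row to every other row having a $1$ in the pivot column, clearing the column both below and above the pivot. Over $\mathbb{F}_2$ this clearing step is exactly what enforces the ``reduced'' condition, since adding the pivot row flips precisely the offending $1$'s to $0$ and leaves the pivot as the unique nonzero entry of its column. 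Deleting the pivot row and the columns up to and including the pivot leaves a strictly smaller binary matrix, so by the inductive hypothesis it can be reduced; reassembling the pieces yields an RREF of the original matrix.

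For the rank claim, I would proceed in two steps. First, each elementary row operation is invertible over $\mathbb{F}_2$: a swap is its own inverse, and adding row $i$ to row $j$ is undone by adding it again, because $1+1=0$. Equivalently, each operation is left multiplication by an invertible matrix over $\mathbb{F}_2$, so it leaves the row space unchanged and hence preserves the rank, where rank is taken to be the dimension of the row space. Second, I would show that for a matrix $R$ already in RREF the pivot rows form a basis of the row space. They span because the non-pivot rows of an RREF are zero and contribute nothing. They are linearly independent because each pivot row carries a leading $1$ in a column where every other row is $0$; reading off any vanishing $\mathbb{F}_2$-combination of the pivot rows at each pivot column forces the corresponding coefficient to be $0$. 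Hence the dimension of the row space of $R$ equals its number of pivots, and combined with the first step this number is the rank of the original matrix.

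The argument is essentially routine linear algebra, and restricting to $\mathbb{F}_2$ only simplifies it, since no rescaling of pivots is ever required and the elimination step is a single addition. The one point that deserves explicit care is the invariance of rank under row operations: I would make this precise by phrasing each operation as left multiplication by an invertible matrix, so that the row space and therefore its dimension are genuinely unchanged. With that invariance secured, identifying the rank with the pivot count follows directly from the structure of the RREF, completing the proof.
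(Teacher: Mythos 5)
Your argument is correct in substance, and for context: the paper offers no proof of this lemma at all---it is invoked as standard linear algebra over $\mathbb{F}_2$, and what you wrote is precisely the standard textbook argument (Gaussian elimination with the scaling operation vacuous, plus invariance of the row space under invertible row operations). One bookkeeping point in your inductive step deserves a fix, though. You clear the pivot column and then recurse on the submatrix obtained by deleting the pivot row and the columns up through the pivot; after the recursion that submatrix is in RREF, but the already-placed first row may still carry $1$'s in the columns that \emph{became} pivot columns during the recursion. So the reassembled matrix as you describe it is in echelon form but not necessarily \emph{reduced} form, and the sentence ``reassembling the pieces yields an RREF'' is not quite true as stated. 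The repair is one line: after reassembly, add each newly created pivot row to the first row wherever the first row has a $1$ in that pivot column (this cannot disturb the first row's own pivot or anything to its left, since the new pivot rows vanish there); equivalently, structure the algorithm as forward elimination to echelon form followed by a back-substitution pass that clears above all pivots. It is also worth noticing that your rank claim does not actually need the reduced property: in any echelon form the nonzero rows are the pivot rows and they are independent because their leading-$1$ positions strictly increase (evaluate any vanishing combination at the leftmost pivot column involved), so rank equals pivot count already at the REF stage---the reduced form merely makes the coefficient-reading argument you gave especially clean, which is all the paper's later use of the lemma (in the proof of Theorem~\ref{kcycles}) requires.
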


\begin{Lem}\label{const}
A constant row (all 0's or all 1's) of a cycle structure matrix corresponds to a controlled permutation with the constant row acting as the control (or anticontrol) and the remaining row(s) corresponding to some permutation.  
\end{Lem}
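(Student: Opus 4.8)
The plan is to unpack the definition of the cycle structure matrix and read off directly what a constant row forces about the permutation. Recall that the columns of the cycle structure matrix of $P$ are exactly the states (as binary column vectors of length $m$) that $P$ does not fix, and that $P$ acts by cyclically advancing the columns within each disjoint cycle. Label the rows by qubits $1,\dots,m$ and suppose row $j$ is constant.

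First I would treat the all-ones case. If row $j$ is identically $1$, then every state moved by $P$ has its $j$-th bit equal to $1$; equivalently, every state with $j$-th bit $0$ appears in no column and is therefore fixed by $P$. Writing a computational basis state as $(x_j,y)$ with $y\in\{0,1\}^{m-1}$ recording the other qubits, this says $P(0,y)=(0,y)$ for all $y$, i.e. $P$ is the identity on the $x_j=0$ subspace. That is precisely the behavior of a permutation controlled on qubit $j$.

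Second, I would identify the permutation acting on the remaining qubits. Each column has the form $(1,w_i)$, and since the full columns are distinct but all agree in position $j$, deleting row $j$ leaves the reduced columns $w_i$ still distinct. I define $\sigma$ on $\{0,1\}^{m-1}$ to send $w_i\mapsto w_{i+1}$ within each cycle and to fix everything else; the distinctness just noted guarantees $\sigma$ is a well-defined permutation, and its cycle structure matrix is exactly the original matrix with row $j$ deleted. Then $P(1,y)=(1,\sigma(y))$, so $P$ is the controlled-$\sigma$ gate with control qubit $j$, the constant row. The all-zeros case is symmetric: the moved states now all have $x_j=0$, the subspace $x_j=1$ is fixed, and $P$ is the anticontrolled-$\sigma$ gate, again with $\sigma$ read off from the remaining rows.

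I do not expect a serious obstacle here, since the content is essentially a rewriting of the definition. The one point requiring care, which I would state explicitly, is the well-definedness of $\sigma$: deleting a constant row cannot collide two distinct columns, precisely because distinct columns that agree in the (constant) $j$-th coordinate must differ in some retained coordinate. A secondary remark worth a sentence is that the argument is insensitive to the number of disjoint cycles present, since ``constant row'' is a condition on all columns simultaneously, so every cycle lies entirely within the single subspace selected by the value of $x_j$.
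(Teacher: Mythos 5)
Your proof is correct and takes essentially the same route as the paper's: both observe that a constant row forces every moved state into a single value of the control bit, so the complementary subspace is pointwise fixed and the gate must have the form $\ket{0}\bra{0}\otimes I + \ket{1}\bra{1}\otimes\sigma$ (or the anticontrolled analogue) with $\sigma$ read off from the remaining rows. Your explicit verification that deleting the constant row keeps the columns distinct, so that $\sigma$ is well defined, is a small point the paper's proof takes for granted but does not alter the argument.
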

\begin{proof}
    Let there be an $m$-cycle of interest with a constant row. Assume without loss of generality that the constant row is the first one, and that it contains all ones. Thus the cycle structure in matrix form looks like this:
    $$
    \left[
    \begin{array}{ccccc}
    1     & 1        &  ...  &  1    &  1 \\ \cline{1-5}
    \bordl      &          &       &   & \bordr    \\ 
    \bordl &          &  \sigma     &    & \bordr  \\ 
    \bordl      &          &       &  & \bordr     \\ \cline{1-5}
    \ & \ & \ & \  & \\
  \end{array}\right]
  $$
    Where $\sigma$ is the cycle structure of the $m$-cycle on the non-fixed qubits. We can see that the quantum circuit that implements this cycle structure must be of the form
    $$
    G = \ket{0}\bra{0} \otimes I + \ket{1}\bra{1} \otimes \sigma
    $$
    since if the first qubit is in state $\ket{1}$, $\sigma$ is applied to permute the remaining qubits, and if the first qubit is in $\ket{0}$, all states must be fixed. This is the definition of a controlled gate on the first qubit, and we thus have that $G=C(\sigma)$.
\end{proof}
Now we prove Theorem~\ref{kcycles}.

\begin{proof}
Affine equivalence allows for CNOTs between rows of the cycle structure. Hence we have SWAPs between rows and row addition (modulo 2) which provide all the elementary row operations for a binary matrix. In addition to elementary row operations, we have an affine operation of adding $1$ (addition modulo $2$) to all entries in a row and we can permute columns cyclically. 

From Lemma~\ref{rreLem} we can put any length-$k$ cycle structure into reduced row echelon form. Additionally, we can always permute columns and/or SWAP rows to put a 1 in the upper left corner of the matrix. The placement of the other pivots within the reduced row echelon form is not fixed, but is easily bounded. The non-pivot rows can be any binary vector with 0's below the nearest pivot to the left. 

Let $\mathcal{A}_k$ contain one representative of each affine equivalence class of $r \times k$ binary matrices with $r\le k$. Note that matrices in $\mathcal{A}_k$ have rank $\leq{k}$. 

Now take any $k+1 \times k$ binary matrix corresponding to a length-$k$ cycle structure on $k+1$ qubits. Using Lemma~\ref{rreLem} we can bring this into reduced row echelon form which will have at least one row of all 0's. From Lemma~\ref{const} this a controlled permutation. Furthermore, this controlled permutation (the $r\le k$ non-constant rows of the matrix) is affine equivalent to a permutation in $\mathcal{A}_k$. By repeating this procedure we can see that one of the following must be true for any $m \times k$ binary matrix with $m > k$: By adding an additional row, either the rank will increase (but remain strictly $\leq k$) to make a controlled permutation which is affine equivalent to a higher-ranked element of $\mathcal{A}_k$ or another control will be added making a multi-controlled permutation which is itself affine equivalent to an element in $\mathcal{A}_k$.
\end{proof}

\begin{theorem}\label{ctrlCH}
    From Anderson \& Weippert\cite{Anderson2024a}: A controlled-$U$ gate is in $\mathcal{CH}$ only if $U\in \mathcal{CH}$ and $U^{2^k} = I$ for some nonnegative integer $k$.
\end{theorem}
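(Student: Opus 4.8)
The plan is to label the control qubit $0$ and write the controlled gate as $C(U)=\ket{0}\bra{0}\otimes I+\ket{1}\bra{1}\otimes U=\text{diag}(I,U)$, assuming $C(U)\in\mathcal{CH}_\ell$ for some finite $\ell$. I would prove the two conclusions, $U\in\mathcal{CH}$ and $U^{2^k}=I$, by two separate conjugation arguments, since (as noted below) they require genuinely different inputs.

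For the membership statement I would conjugate $C(U)$ by Paulis supported only on the target register. For a target Pauli $\tilde P=I\otimes P$ one computes $C(U)\,\tilde P\,C(U)^\dagger\,\tilde P^\dagger=\ket{0}\bra{0}\otimes I+\ket{1}\bra{1}\otimes UPU^\dagger P^\dagger=C(UPU^\dagger P^\dagger)$. Since $\tilde P$ is Clifford, right-multiplication preserves the level, so this lies in $\mathcal{CH}_{\ell-1}$. I would then induct on $\ell$: the base case $C(U)\in\mathcal{CH}_1$ forces $U=\pm I$ by inspecting which block-diagonal gates are Paulis, and for the inductive step the hypothesis applied to $C(UPU^\dagger P^\dagger)\in\mathcal{CH}_{\ell-1}$ gives $UPU^\dagger P^\dagger\in\mathcal{CH}$ for every $P$. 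As there are finitely many Paulis $P$, one may take the maximal level $j-1$ over all of them, whence $UPU^\dagger\in\mathcal{CH}_{j-1}$ for all $P$; this is exactly the statement $U\in\mathcal{CH}_j$.

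The order condition I would extract from conjugation by the control Pauli $X_0$, which simply swaps the two control blocks of a block-diagonal gate. Writing $V_m=\text{diag}(U^m,U^{-m})$, the group commutator $[C(U),X_0]=C(U)X_0C(U)^\dagger X_0$ equals $V_1^{-1}$, so $V_1\in\mathcal{CH}_{\ell-1}$; and more generally, because $X_0 V_m^\dagger X_0=V_m$, I obtain the doubling identity $[V_m,X_0]=V_m^2=V_{2m}$. Each commutator with the Pauli $X_0$ drops the hierarchy level by one, so the chain $V_1,V_2,V_4,\dots$ satisfies $V_{2^j}\in\mathcal{CH}_{\ell-1-j}$ and reaches $V_{2^{\ell-2}}\in\mathcal{CH}_1$. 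One further commutator of this Pauli with $X_0$ is a scalar (the commutator of two Paulis is $\pm I$), giving $V_{2^{\ell-1}}=\text{diag}(U^{2^{\ell-1}},U^{-2^{\ell-1}})=\pm I$. Hence $U^{2^{\ell-1}}=\pm I$ and therefore $U^{2^{\ell}}=I$, a power of two as claimed.

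The main obstacle is conceptual rather than computational: $U\in\mathcal{CH}$ by itself does not force a $2$-power order (the Clifford group already contains elements of order $3$, such as the cyclic permutation $X\mapsto Y\mapsto Z$), so the order constraint must be wrung out of the controlled structure specifically. The crux is the doubling identity $[V_m,X_0]=V_{2m}$: it converts ``descend one level of the hierarchy'' into ``double the exponent of $U$,'' so the finite depth $\ell$ of $\mathcal{CH}$ caps the attainable exponent and forces $U^{2^{\ell}}=I$. Spotting this identity, and checking that all the intermediate commutators remain controlled-type gates (which holds precisely because $X_0$ only swaps blocks), is the step I expect to require the most care.
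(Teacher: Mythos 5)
Your proof is sound, but there is nothing in the paper to compare it against: Theorem~\ref{ctrlCH} is stated as an imported result, quoted from Anderson \& Weippert \cite{Anderson2024a}, and the paper gives no proof of it. Taken on its own terms, your argument checks out. The block identity $C(U)\,\tilde P\,C(U)^\dagger\,\tilde P^\dagger = C(UPU^\dagger P^\dagger)$ is correct, and the induction on the level $\ell$ (with base case $C(U)\in\mathcal{CH}_1 \Rightarrow U=\pm I$) cleanly yields $U\in\mathcal{CH}$; the doubling identity $[V_m,X_0]=V_{2m}$ together with the fact that a group commutator with a Pauli lowers the level by one yields $U^{2^\ell}=I$. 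Three small supporting facts are used without proof and should be flagged, though all are standard: (i) right multiplication by a Pauli preserves each level exactly, since $(WQ)P(WQ)^\dagger = \pm\, WPW^\dagger$; (ii) the levels are nested, $\mathcal{CH}_{k}\subseteq\mathcal{CH}_{k+1}$, which you need when taking the maximum level over the finitely many Paulis in the membership step; and (iii) the passage from $[C(U),X_0]=V_1^{-1}\in\mathcal{CH}_{\ell-1}$ to $V_1\in\mathcal{CH}_{\ell-1}$ uses that inversion preserves the level (the fact the paper cites as \cite{SEInverses}) --- though you can avoid (iii) entirely by running the doubling chain on $V_{-1}$ directly, since $[V_m,X_0]=V_{2m}$ holds for negative $m$ as well and $\mathrm{diag}(U^{-2^{\ell-1}},U^{2^{\ell-1}})=\pm I$ gives the same conclusion. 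Your closing observation is also apt: the $2$-power order genuinely cannot follow from $U\in\mathcal{CH}$ alone (the Clifford group contains exact order-$3$ elements), so the control-block structure, exploited through $X_0$, is exactly where that constraint must come from.
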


\begin{corollary}\label{CorCS}
From theorems \ref{kcycles} and \ref{ctrlCH} above it follows that any length-$k$ cycle structure on $m > k$ qubits is in $\mathcal{CH}$ only if it is a $(2^{k_0},2^{k_1}, \ \dots \ ,2^{k_N})$-cycle where each integer $k_i \ge 1$.
\end{corollary}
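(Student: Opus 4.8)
The plan is to chain Theorems~\ref{kcycles} and~\ref{ctrlCH} together and then translate the resulting algebraic constraint on the order of a permutation into a constraint on its cycle lengths. First I would invoke Theorem~\ref{kcycles}: any length-$k$ cycle structure $P$ on $m > k$ qubits is affine equivalent to a (possibly multiply-)controlled permutation $C(\sigma)$. Since the affine operations used on cycle structures are realized by conjugation with affine (hence Clifford) permutations, they preserve the level in the Clifford Hierarchy; thus $P \in \mathcal{CH}$ if and only if $C(\sigma) \in \mathcal{CH}$, and they likewise preserve cycle type, so the nontrivial cycle lengths of $P$ agree with those of $C(\sigma)$.

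Next I would apply Theorem~\ref{ctrlCH} to $C(\sigma)$: membership in $\mathcal{CH}$ forces $\sigma \in \mathcal{CH}$ and $\sigma^{2^j} = I$ for some nonnegative integer $j$. In the case of several nested controls $C^t(\sigma) = C\bigl(C^{t-1}(\sigma)\bigr)$ I would iterate the theorem, using the identity $\bigl(C^{t}(\sigma)\bigr)^{2^j} = C^{t}\bigl(\sigma^{2^j}\bigr)$ together with the fact that a controlled gate equals the identity only when its target does, so that at every level the binding requirement collapses to the single condition $\sigma^{2^j} = I$.

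The final step is purely combinatorial: the relation $\sigma^{2^j} = I$ says the order of $\sigma$ divides $2^j$, so the length of every cycle in the disjoint-cycle decomposition of $\sigma$ divides $2^j$ and is therefore a power of $2$. Because omitted fixed points correspond to $1$-cycles, every \emph{nontrivial} cycle has length $2^{k_i}$ with $k_i \ge 1$. Since the nontrivial cycle lengths of $P$ coincide with those of $C(\sigma)$, and controlling $\sigma$ only adjoins fixed points without altering cycle lengths, it follows that $P$ is a $(2^{k_0}, 2^{k_1}, \dots, 2^{k_N})$-cycle.

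I expect the main obstacle to be bookkeeping rather than any deep idea: I must verify that the reduction of Theorem~\ref{kcycles} changes neither the Clifford-Hierarchy level nor the multiset of nontrivial cycle lengths (the controls contribute only fixed points), and that the multi-control case genuinely reduces to a single order constraint on $\sigma$ via $\bigl(C^{t}(\sigma)\bigr)^{2^j} = C^{t}\bigl(\sigma^{2^j}\bigr)$. Once those two points are pinned down, converting $\sigma^{2^j} = I$ into the statement that all cycle lengths are powers of $2$ is immediate.
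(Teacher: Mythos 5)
Your proposal is correct and takes essentially the same route as the paper, which states Corollary~\ref{CorCS} as an immediate consequence of Theorems~\ref{kcycles} and~\ref{ctrlCH} without writing out the intermediate steps. Your elaboration---that the row operations in Theorem~\ref{kcycles} are conjugations by affine (hence Clifford) permutations and so preserve both membership in $\mathcal{CH}$ and the multiset of nontrivial cycle lengths, that the multi-control case collapses to a single order constraint on the target $\sigma$, and that $\sigma^{2^j}=I$ forces every nontrivial cycle length to be $2^{k_i}$ with $k_i\ge 1$---is exactly the bookkeeping the paper leaves implicit.
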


We can actually improve Corollary \ref{CorCS} by considering the case of a length-$k$ cycle structure on $m=k$ qubits. If this cycle structure has rank less than $m$ then it must be affine equivalent to a controlled permutation. Therefore, we only need to consider the full-rank case. From the discussion above this cycle structure is affine equivalent to the identity matrix which itself does not have a constant row and does not in its current form correspond to a controlled permutation. However, by conjugating the bottom ($m$th) qubit with $X$ and conjugating by a series of $CNOT$s each with a control on one of the other $m-1$ qubits and a target on the $m$th qubit we can see that this permutation is affine equivalent to a controlled permutation. Our slightly improved corollary now reads as follows:

\begin{corollary}\label{CorCS2}
From theorems \ref{kcycles} and \ref{ctrlCH} above it follows that any length-$k$ cycle structure on $m > k-1$ qubits is in $\mathcal{CH}$ only if it is a $(2^{k_0},2^{k_1}, \ \dots \ ,2^{k_N})$-cycle where each integer $k_i \ge 1$.
\end{corollary}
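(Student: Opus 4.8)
The plan is to reduce everything to the single case not already handled by Corollary~\ref{CorCS}. Since the hypothesis $m > k-1$ is the same as $m \ge k$, and Corollary~\ref{CorCS} already settles every length-$k$ cycle structure on $m > k$ qubits, the only new situation is a length-$k$ cycle structure living on exactly $m = k$ qubits, i.e.\ a $k \times k$ binary matrix $M$ whose $k$ columns are distinct. My entire task is to show that such an $M$, whenever it lies in $\mathcal{CH}$, is affine equivalent to a controlled permutation, after which Theorems~\ref{kcycles} and \ref{ctrlCH} finish the argument exactly as in Corollary~\ref{CorCS}.

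First I would split on $\operatorname{rank}(M)$. If $\operatorname{rank}(M) < k$, then by Lemma~\ref{rreLem} the reduced row echelon form of $M$ has strictly fewer than $k$ pivots and hence contains an all-zero row; this is a constant row, so Lemma~\ref{const} immediately exhibits the permutation as a controlled permutation $C(\sigma)$ on the remaining rows. Note that deleting a constant row cannot collide any two columns, so $\sigma$ is a genuine permutation with the same cycle structure as $M$. Applying Theorem~\ref{ctrlCH} then forces $\sigma^{2^j} = I$ for some $j$, so every cycle length of $\sigma$ is a power of two; and because the cycle-structure operations I use are conjugations by affine permutations (together with cyclic relabeling within a cycle), they preserve cycle type, so the same holds for $M$, giving the desired $(2^{k_0},\dots,2^{k_N})$ structure.

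The substantive case, and the one I expect to be the main obstacle, is $\operatorname{rank}(M) = k$. Here the reduced row echelon form is the identity matrix $I_k$, which has no constant row, so Lemma~\ref{const} cannot be invoked directly and it is not \emph{a priori} clear that a full-rank length-$k$ cycle structure on $m = k$ qubits is a controlled permutation at all. The key idea is that the affine group supplies a Pauli-$X$ shift in addition to the row operations: I would conjugate by $X$ on the bottom ($m$th) qubit, which adds $1$ to every entry of row $m$, and then conjugate by the $m-1$ CNOTs whose controls are qubits $1,\dots,m-1$ and whose common target is qubit $m$, each of which adds one of the remaining rows to row $m$. A short mod-$2$ calculation shows that this sweep turns row $m$ into the all-zero row while leaving the surviving $k-1$ rows with distinct columns, so we are back to a constant row. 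Lemma~\ref{const} then presents a controlled permutation $C(\sigma)$ with $\sigma$ acting on $k-1$ qubits, and Theorem~\ref{ctrlCH} closes the argument just as in the rank-deficient case.

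The remaining points are routine and I would not belabor them: that the three families of moves (CNOT row additions, the $X$ row shift, and cyclic column relabelings) genuinely act on the cycle-structure matrix as stated is precisely the dictionary already established in the proof of Theorem~\ref{kcycles}, and the explicit verification that the $X$-plus-CNOT sweep zeroes out row $m$ is a one-line computation. The conceptual content lies entirely in recognizing that the Pauli-$X$ shift is what rescues the full-rank case, where pure row reduction yields the identity and leaves no constant row to exploit.
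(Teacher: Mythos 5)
Your proposal is correct and follows essentially the same route as the paper: reduce to the $m=k$ case, handle rank-deficient matrices via the zero row in reduced row echelon form and Lemma~\ref{const}, and rescue the full-rank case (where the RREF is the identity) by conjugating with $X$ on the bottom qubit followed by CNOTs from the other $m-1$ qubits targeting it, which zeroes out the bottom row and yields a controlled permutation to which Theorem~\ref{ctrlCH} applies. Your write-up is in fact somewhat more careful than the paper's, spelling out the mod-$2$ sweep and the preservation of column distinctness and cycle type, but the underlying argument is identical.
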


\begin{Lem}\label{lem:CS3}
The number of affine equivalence classes for a length-$k$ cycle structure is constant for $m>k-1$ qubits.  
\end{Lem}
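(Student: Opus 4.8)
The plan is to exhibit, for each $m\ge k$, a bijection between the affine equivalence classes of length-$k$ cycle structures on $m$ qubits and those on $m+1$ qubits. The natural candidate is the \emph{stabilization map} that appends a constant row: given an $m\times k$ cycle-structure matrix $M$, adjoin a row of all $0$'s to obtain an $(m+1)\times k$ matrix $M'$. Appending a constant entry to every column preserves distinctness of columns and leaves the underlying $k$-cycle unchanged (the new qubit is simply fixed), so $M'$ is again a length-$k$ cycle structure; and since the appended row is never altered by row operations among the first $m$ rows, the map descends to equivalence classes. Surjectivity is then immediate from the earlier structural results: for $m+1>k$, Theorem~\ref{kcycles} (together with Lemma~\ref{const} and the argument of Corollary~\ref{CorCS2}) guarantees that every class on $m+1$ qubits has a representative with a constant row, which we may complement to all $0$'s (an allowed affine move) and permute to the bottom, exhibiting it as the image of a class on $m$ qubits.

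The main obstacle is \textbf{injectivity}: if the stabilized matrices $M_1'$ and $M_2'$ are affine equivalent on $m+1$ qubits, one must conclude that $M_1$ and $M_2$ are already affine equivalent on $m$ qubits. The naive reduction fails, because writing the equivalence as $M_2'=A'M_1'C+b'\mathbf 1^\top$ with $A'\in GL_{m+1}(\mathbb{F}_2)$ and $C$ a cyclic column permutation, the induced top-left $m\times m$ block $A$ need not be invertible, so one cannot simply read off an $m$-qubit affine transformation.

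To overcome this I would replace matrices by the \emph{difference sequence} of a cycle structure. Writing the columns of $M$ in cyclic order as $x_1,\dots,x_k$, set $\delta_i=x_i-x_{i-1}$ (indices mod $k$), so that $\sum_i\delta_i=0$. Under an affine map $x\mapsto Ax+b$ the translation $b$ cancels and $\delta_i\mapsto A\delta_i$, so affine equivalence of cycle structures is exactly $GL_m(\mathbb{F}_2)$-equivalence of the tuples $(\delta_1,\dots,\delta_k)$ up to cyclic rotation; conversely, matching difference tuples recover the cycle structures up to the free global shift $b$. Distinctness of the columns translates into the purely intrinsic condition that no nonempty proper cyclically-consecutive subsum of the $\delta_i$ vanishes. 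The key point is that a $GL_m$-orbit of such a tuple is determined entirely by its linear-dependency structure, which references only the span of the $\delta_i$; because of the single relation $\sum_i\delta_i=0$ this span has dimension at most $k-1$.

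Hence for every $m\ge k-1$ the ambient space has room for all such configurations, and two tuples are $GL_m$-equivalent precisely when their spans admit a labeled linear isomorphism, a condition independent of $m$. The affine equivalence classes on $m$ qubits are therefore in bijection with this fixed, $m$-independent set of labeled rank-$\le(k-1)$ configurations (subject to the distinctness condition and taken up to rotation), which both proves injectivity of the stabilization map and shows the total count is constant for all $m>k-1$, as claimed. The one routine point to verify carefully is the claimed dictionary between affine moves on the matrix and linear moves on the difference sequence, including that the distinctness constraint is genuinely an invariant of the tuple.
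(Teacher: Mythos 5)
Your overall strategy is sound and, on the key point, goes further than the paper's own argument. The paper's proof of this lemma is essentially only the surjectivity half of your plan: it invokes Theorem~\ref{kcycles} and Corollary~\ref{CorCS2} to conclude that every class on $m>k-1$ qubits arises by adding controls to a class on fewer qubits, which shows the count cannot increase with $m$, but it never addresses whether adding a control can merge two previously inequivalent classes. You correctly isolate that injectivity question as the crux, and your difference-sequence argument settles it: over $\mathbb{F}_2$, two $k$-tuples $(\delta_i)$ and $(\delta_i')$ lie in the same $GL_m(\mathbb{F}_2)$-orbit precisely when the kernels of the evaluation maps $e_i\mapsto\delta_i$ and $e_i\mapsto\delta_i'$ coincide (an isomorphism between the two spans extends to an automorphism of $\mathbb{F}_2^m$ because the spans have equal dimension); the relation $\sum_i\delta_i=0$ forces that dimension to be at most $k-1$; hence every admissible dependency structure is realizable as soon as $m\ge k-1$, and the classes are enumerated by admissible kernels up to cyclic rotation, independently of $m$. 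This both proves injectivity of the stabilization map and gives a direct count, in fact yielding constancy already for $m\ge k-1$, slightly stronger than the stated $m>k-1$ and consistent with the stabilization visible in Table~\ref{table:Cycles}.

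There is, however, one genuine gap: your dictionary is set up only for a single $k$-cycle, whereas the lemma as the paper uses it (see the $(2,2)$, $(4,2)$, $(3,3)$ and $(2,2,2)$ columns of Table~\ref{table:Cycles}) covers cycle structures consisting of several disjoint cycles with $k$ columns in total. For those, the within-cycle differences do not determine the configuration: the relative offsets between different cycles matter, and your tuple does not record them, so as written the injectivity argument does not apply to the multi-cycle case. The repair is routine but must be made explicit: fix a base column in each cycle and adjoin to the tuple the differences between base columns of distinct cycles; these also transform as $\delta\mapsto A\delta$ (the translation $b$ cancels), every pairwise difference of columns is a sum of entries of the enlarged tuple, so the distinctness condition remains an intrinsic non-vanishing condition, and the affine hull of the $k$ columns still has dimension at most $k-1$, preserving the realizability bound $m\ge k-1$. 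The symmetry group acting on the enlarged tuple is then cyclic rotation within each cycle together with permutations of cycles of equal length, rather than a single rotation. With that amendment, plus the verification you already flag of the dictionary between affine moves and linear moves, your proof is complete and is a genuinely more rigorous route than the paper's.
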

From theorem \ref{kcycles} and corollary \ref{CorCS2} above it follows that any $(m>k-1)$-qubit permutation corresponding to a length-$k$ cycle structure is affine equivalent to a controlled permutation. Then, all affine equivalence classes for $m>k-1$ qubits are obtained by adding controls to existing permutations.

\section{$N$-qubit Classification of Cycle Structures}
From the last section we know that for a length-$k$ cycle structure all cycle structures on $m \ge k-1$ qubits are controlled permutations. Furthermore, we know that the number of affine equivalence classes for a length-$k$ cycle structure on  $m \ge k-1$ qubits is constant. This provides a strategy for finding all affine equivalence classes for a fixed cycle structure. In this section, we will use the affine equivalence testing algorithm of Canniere~\cite{Canniere2007} to check all permutations in a fixed cycle structure for up to $m = k-1$ qubits. For $m > k-1$ we can then use Theorem~\ref{ctrlCH} to exclude affine equivalence classes that are not in $\mathcal{CH}$ and to exclude cycles structures that have order which is not a power of two. When cycle structures remain, we must prove whether or not adding a control always keeps this cycle structure in the Clifford Hierarchy.     

Using this strategy, we classified all cycle structures with up to $6$ non-fixed elements. The table below summarizes our results as a function of number of qubits, $n$, and cycle structure $c$. Note that non-trivial entries of the table start at $3$ qubits, since all permutations are affine equivalent on $2$ or fewer qubits.

\begin{table}[h!]
\centering
\begin{tabular}{l|c|c|c|c|c|c|c|c|c|c|c|}
\diagbox{$n$}{$c$} & \text{Id.} & $(2)$  & $(3)$  & $(4)$ & $(2,2)$ & $(5)$ & $(2,3)$ & $(6)$ & $(4,2)$ & $(3,3)$ & $(2,2,2)$ \\
\hline
$n=1$ & $\textcolor{magenta}{1}/1$ & $\textcolor{magenta}{1}/1$ & $0$ & $0$ & $0$ & $0$ & $0$ & $0$ & $0$ & $0$ & $0$ \\
\hline
$n=2$ & $\textcolor{magenta}{1}/1$ & $\textcolor{magenta}{1}/1$ & $\textcolor{magenta}{1}/1$  & $\textcolor{magenta}{1}/1$ & $\textcolor{magenta}{1}/1$ & $0$ & $0$ & $0$ & $0$ & $0$ & $0$ \\
\hline
$n=3$ & $\textcolor{magenta}{1}/1$ & $\textcolor{magenta}{1}/1$ & $\textcolor{magenta}{0}/1$ & $\textcolor{magenta}{1}/2$ & $\textcolor{magenta}{1}/2$ & $\textcolor{magenta}{0}/1$ &
$\textcolor{magenta}{1}/2$ &
$\textcolor{magenta}{1}/2$ & 
$\textcolor{magenta}{1}/2$ &
$\textcolor{magenta}{1}/2$ &
$\textcolor{magenta}{1}/2$\\
\hline
$n=4$ & $\textcolor{magenta}{1}/1$ & $\textcolor{magenta}{1}/1$ & $\textcolor{magenta}{0}/1$ & $\textcolor{magenta}{1}/2$ & $\textcolor{magenta}{1}/2$ & $\textcolor{magenta}{0}/2$ &
$\textcolor{magenta}{0}/3$ &
$\textcolor{magenta}{0}/9$ &
$\textcolor{magenta}{1}/9$ &
$\textcolor{magenta}{0}/6$ &
$\textcolor{magenta}{2}/6$\\
\hline
$n\geq 5$ & $\textcolor{magenta}{1}/1$ & $\textcolor{magenta}{1}/1$ & $\textcolor{magenta}{0}/1$ & $\textcolor{magenta}{1}/2$ & $\textcolor{magenta}{1}/2$ & $\textcolor{magenta}{0}/2$ &
$\textcolor{magenta}{0}/3$ &
$\textcolor{magenta}{0}/10$ & $\textcolor{magenta}{1}/10$ & $\textcolor{magenta}{0}/7$& $\textcolor{magenta}{2}/7$\\ 
\hline
\end{tabular}
\caption{In black (right): the number of Affine Equivalence classes of cycles for varying numbers of qubits and cycle structures. In \textcolor{magenta}{magenta (left)}: the number of those cycles which are in $\mathcal{CH}$. Cycle structures are considered with up to $6$ elements to reduce computational requirements.}\label{table:Cycles} 
\end{table}

We use a computer to find the number of affine equivalence classes and also to test membership in $\mathcal{CH}$ for a representative of each class. 

We obtain nonzero \textcolor{magenta}{magenta} entries in the ``$n\ge 5$'' row by proving that adding any number of controls to elements in the affine equivalence class preserves membership in $\mathcal{CH}$. When the equivalence class is composed of semi-Clifford elements\footnote{The elements in an affine equivalence class are either all semi-Clifford or none of them are semi-Clifford}, we can appeal to the classification of diagonal gates in $\mathcal{CH}$ \cite{Cui2017}. We developed a technique to prove that certain non-semi-Clifford permutations are in $\mathcal{CH}$ (see \ref{App:1MM}), but a general method is still lacking.      

\subsection{Families of Permutations in $\mathcal{CH}$}
In this section, we provide circuit diagrams and cycle structure matrices for each family of permutations in $\mathcal{CH}$. 

\begin{Def}
We define a \textbf{circuit family} by a \textbf{generating unitary} and all members of the family are obtained by adding controls to the generating unitary.
\end{Def}

Here we are interested in families of permutation gates which have the property that all members of the family (on a finite number of qubits) are in the Clifford Hierarchy. We list the families of permutation gates with this property that we have found below.

\begin{figure}[!htb]
\begin{equation*}
\begin{array}{ccc}
\Qcircuit @C=0.5em @R=1.3em {
& & \\
& \ctrl{-1}\qwx[1] & \qw \\
& \ctrl{1} & \qw \\
& \targ & \qw }
\end{array} 
\Leftrightarrow 
\begin{bmatrix}
1 & 1 \\
1 & 1 \\
0 & 1 \\
\end{bmatrix}
\end{equation*}
\caption{The Toffoli gate is a 3-qubit gate in $\mathcal{CH}$ and has the cycle structure: $(2)$. By adding controls to this gate, conjugating by CNOT networks (linear permutations), and multiplying by Pauli $X$ strings (affine permutations) we can construct all $(2)$-cycles for any $n$. All such transpositions are in $\mathcal{CH}$.}
\end{figure}
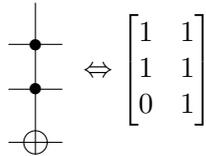
\begin{figure}[!htb]

\begin{equation*}
\begin{array}{ccc}
\Qcircuit @C=0.5em @R=1.3em {
 & & \\
 & \qw & \qw \\
& \ctrl{-2}\qwx[1] & \qw \\
& \ctrl{1} & \qw\\
& \targ & \qw\\
}
\end{array}
\Leftrightarrow 
\left [ 
\begin{array}{cc|cc}
1 & 1 & 0 & 0\\
1 & 1 & 1 & 1\\
1 & 1 & 1 & 1\\
0 & 1 & 0 & 1\\
\end{array}
\right ]
\end{equation*}
\caption{The 3-qubit Toffoli gate on 4 qubits in $\mathcal{CH}$ and has the cycle structure: $(2,2)$. We can add controls and multiply by Clifford permutations to construct a family of $(2,2)$ cycles in $\mathcal{CH}$ for any $n$. The top-most bare wire indicates that all members of this family must contain exactly one bare wire. However, in contrast to the $(2)$ case there is an another affine equivalence class of $(2,2)$-cycles which is not in $\mathcal{CH}$.}
\end{figure}
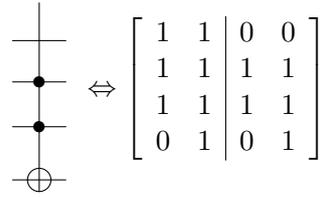
 
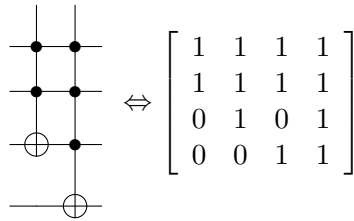
\begin{figure}[!htb]
\begin{equation*}
\begin{array}{ccc}
\Qcircuit @C=0.5em @R=1.3em {
& & & \\
& \ctrl{-1}\qwx[1] & \ctrl{-1}\qwx[1]  & \qw\\
& \ctrl{1} &  \ctrl{1} & \qw\\
& \targ & \ctrl{1} & \qw\\
& \qw & \targ & \qw\\
}
\end{array}
\Leftrightarrow 
\left [ 
\begin{array}{cccc}
1 & 1 & 1 & 1\\
1 & 1 & 1 & 1\\
0 & 1 & 0 & 1\\
0 & 0 & 1 & 1\\
\end{array}
\right ]
\end{equation*}
\caption{This 4-qubit gate has the cycle structure: $(4)$. By adding controls to the gate we can construct a family of permutations in $\mathcal{CH}$. This gate family is not semi-Clifford, but since all members of this family have 1-wire mismatch (see \ref{App:1MM} for definition) we can use Corollary.~\ref{Corr:Fam} to show that this gate family is in $\mathcal{CH}$.}
\end{figure}

\begin{figure}[!htb]
\begin{equation*}
\begin{array}{ccc}
\Qcircuit @C=0.5em @R=1.3em {
& & & & & &\\
& \qw      & \qw      & \ctrl{-1}\qwx[1] & \ctrl{-1}\qwx[1]  & \qw      & \qw \\
& \gate{X} & \qw      & \ctrl{1}          & \targ              & \qw      & \gate{X}\\
& \qw      & \ctrl{1} & \targ             & \qw                & \ctrl{1} & \qw\\
& \gate{X} & \targ    & \ctrlo{-1}        & \ctrl{-2}          & \targ    & \gate{X}\\
}
\end{array}
\Leftrightarrow 
\left [ 
\begin{array}{cc|cc|cc}
1 & 1 & 1 & 1 & 1 & 1\\
0 & 1 & 0 & 0 & 0 & 1\\
0 & 0 & 1 & 0 & 1 & 1\\
0 & 0 & 0 & 1 & 1 & 1\\
\end{array}
\right ]
\end{equation*}
\caption{This 4-qubit gate has the cycle structure: $(2,2,2)$. By adding controls to the gate we can construct a family of permutations in $\mathcal{CH}$. This gate family is not semi-Clifford, but we have drawn it to show that Corollary.~\ref{Corr:Fam} can be used to show that this gate family is in $\mathcal{CH}$.}
\end{figure}

\begin{figure}[!htb]
\begin{equation*}
\begin{array}{ccc}
\Qcircuit @C=0.5em @R=1.3em {
& & & & &\\
& \qw & \ctrlo{-1}\qwx[1] & \ctrlo{-1}\qwx[1] & \ctrl{-1}\qwx[1]  & \qw\\
& \qw & \ctrlo{1} & \ctrl{1} & \ctrlo{1} & \qw\\
& \targ & \ctrl{1} & \ctrlo{1} & \ctrlo{1} & \targ\\
& \ctrl{-1} & \targ & \targ & \targ & \ctrl{-1} \\
}
\end{array}
\Leftrightarrow 
\left [ 
\begin{array}{cc|cc|cc}
0 & 0 & 1 & 1 & 0 & 0\\
1 & 1 & 0 & 0 & 0 & 0\\
0 & 1 & 0 & 1 & 0 & 1\\
0 & 1 & 0 & 1 & 1 & 0\\
\end{array}
\right ]
\end{equation*}
\caption{This 4-qubit gate has the cycle structure: $(2,2,2)$ and is not affine equivalent to the permutation above. By adding controls to the gate we can construct a family of permutations in $\mathcal{CH}$. From the figure we can see that this gate family is semi-Clifford.}
\end{figure}
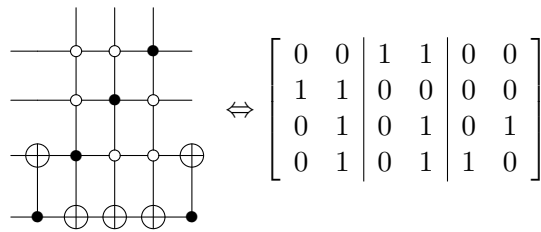

\begin{figure}[!htb]
\begin{equation*}
\begin{array}{ccc}
\Qcircuit @C=0.5em @R=1.3em {
& & \\
& \ctrlo{2}\qwx[-1] & \qw & \qw \\
& \qw & \targ & \qw \\
& \targ & \ctrlo{-3} & \qw\\
}
\end{array}
\Leftrightarrow 
\left [ 
\begin{array}{cccc|cc}
0 & 0 & 0 & 0 & 1 & 1\\
0 & 0 & 1 & 1 & 0 & 1\\
0 & 1 & 0 & 1 & 0 & 0\\
\end{array}
\right ]
\end{equation*}
\caption{This 3-qubit gate has the cycle structure: $(4,2)$. By adding controls to the gate we can construct a family of permutations in $\mathcal{CH}$. This gate family is not semi-Clifford, but we have drawn it to show that Corollary.~\ref{Corr:Fam} can be used to show that this gate family is in $\mathcal{CH}$.}
\end{figure}
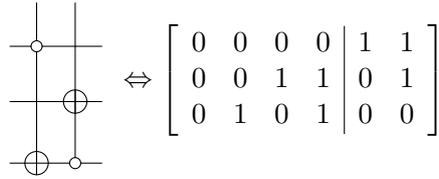
\newpage

\appendix
\section{Proof that all 1-wire-mismatch Permutations are in the Clifford Hierarchy}\label{App:1MM}
In this section we prove a general result about permutations in the Clifford Hierarchy. We use this result repeatedly when proving that certain families of permutation gates are in $\mathcal{CH}$.

First, note that we can write any permutation gate on $n$ qubits as a product of multi-controlled NOT gates. We allow the controls to be `ON' for either 1 or 0 which in standard gate notation is depicted as a closed or open circle, respectively. For a general permutation we must include NOT gates which are simply 0-controlled NOT gates, however, it is easy to show that they can always be pushed to either side of the product of gates by toggling controls and for our purposes it is not necessary to count them when calculating the wire mismatch. This representation is not unique and we only restrict the number of gates in the product to be finite.   

Now, we will define the wire mismatch of a permutation specified by a product of multi-controlled NOT gates as described above. 

\begin{Def}
    The \textbf{wire mismatch} of an $n$-qubit permutation is the number of wires (up to $n$) that have both targets and controls on them.
\end{Def}

For example, a 0-wire-mismatch permutation has only targets or controls on any wire. It follows that all gates in a 0-wire-mismatch permutation commute. It is easy to show that all 0-wire-mismatch permutations are in $\mathcal{CH}$ (see \cite{Anderson2024} for details).

\begin{Lem}\label{zeroWire}
    All 0-wire-mismatch permutations are in $\mathcal{CH}$.
\end{Lem}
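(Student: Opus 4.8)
The plan is to reduce an arbitrary $0$-wire-mismatch permutation to a normal form in which the control wires and target wires occupy disjoint sets, and then to diagonalize the target register with Hadamards so that the result is a $\pm 1$ diagonal gate whose membership in $\mathcal{CH}$ is already known from the classification of diagonal gates.

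First I would fix notation. Since the permutation has $0$-wire-mismatch, the set $C$ of wires carrying a control and the set $T$ of wires carrying a target are disjoint. As noted above, all the multi-controlled NOT gates then commute (no target of one gate is a control of another), so their product is independent of ordering and acts on the computational basis as $|\vec c\,\rangle_C|\vec t\,\rangle_T \mapsto |\vec c\,\rangle_C|\vec t \oplus f(\vec c)\rangle_T$ for a well-defined Boolean function $f\colon \mathbb{F}_2^{|C|}\to\mathbb{F}_2^{|T|}$, where the component $f_j$ is the mod-$2$ sum, over the gates targeting wire $j$, of the indicator that $\vec c$ satisfies that gate's controls. Any stand-alone NOT gates contribute only a constant to $f$; equivalently they can be absorbed into a leading Pauli $X$-string, which I will drop and reinstate at the end using the fact that left multiplication by a Clifford does not change the level in $\mathcal{CH}$ \cite{Zeng2008}. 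Thus, up to a Pauli, $U = \sum_{\vec c} |\vec c\,\rangle\langle \vec c\,|_C \otimes X^{f(\vec c)}_T$.

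Next I would conjugate $U$ by a Hadamard on every target wire. Because $HXH = Z$, this sends each $X^{f(\vec c)}$ to $Z^{f(\vec c)}$, giving $U' = \sum_{\vec c} |\vec c\,\rangle\langle \vec c\,|_C \otimes Z^{f(\vec c)}_T$, which acts diagonally as $|\vec c\,\rangle|\vec t\,\rangle \mapsto (-1)^{f(\vec c)\cdot \vec t}\,|\vec c\,\rangle|\vec t\,\rangle$. This is a diagonal gate with all entries in $\{+1,-1\}$, i.e. a member of $Diag^n_1$, and every element of $Diag^n_r$ lies in $\mathcal{CH}$ by the classification of diagonal gates \cite{Cui2017}. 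Since the target Hadamards form a Clifford and conjugation by a Clifford preserves the level in $\mathcal{CH}$ \cite{Zeng2008}, it follows that $U\in\mathcal{CH}$, and reinstating the dropped Pauli (again a Clifford) keeps us in $\mathcal{CH}$.

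I expect the main obstacle to be the bookkeeping in the normal-form step rather than the diagonalization: one must verify that the $0$-wire-mismatch hypothesis really forces $C\cap T=\varnothing$ once the (uncounted) NOT gates are pushed aside, that the commutation argument is unaffected by open/anticontrols, and that the resulting $f$ is genuinely independent of the chosen circuit decomposition. As an alternative that avoids invoking $Diag^n_r$, I could instead verify the recursive definition \eqref{CHdef} directly: conjugating $U$ by $Z$ on a control or $X$ on a target returns a Pauli; conjugating by $X$ on a control wire $i$ returns $X_i$ times a $0$-wire-mismatch permutation whose function $f(\vec c)\oplus f(\vec c\oplus e_i)$ is independent of $c_i$, hence has strictly fewer control wires; and conjugating by $Z$ on a target returns $Z_j$ times a $\pm1$ diagonal gate. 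Inducting on $|C|$ then closes the argument, but the Hadamard route above is shorter.
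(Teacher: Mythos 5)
Your proof is correct and matches the approach the paper itself relies on: although the paper defers this lemma to \cite{Anderson2024}, the identical argument---commute the disjoint-support multi-controlled NOTs, conjugate by Hadamards on the target wires to turn the gate into a $\pm1$ diagonal gate, invoke the classification of diagonal gates \cite{Cui2017} to place it in $\mathcal{CH}$, and discharge the Pauli and Hadamard factors because Clifford multiplication and conjugation preserve the level \cite{Zeng2008}---is exactly the technique the paper deploys in its proof of the 1-wire-mismatch theorem. One caution about your alternative inductive sketch: since $\mathcal{CH}_k$ is not a group for $k\ge 3$, conjugation by general mixed $X$/$Z$ Pauli strings produces products of a permutation and a diagonal gate whose level is not controlled by checking single-qubit Paulis alone, so the Hadamard route is the one to keep.
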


In this section, we will prove that all 1-wire-mismatch permutations are in $\mathcal{CH}$. We will use the following lemmas:

From Anderson \cite{Anderson2024},
\begin{Lem}\label{gsc}
    For a permutation, $P$, and a diagonal gate $D$, $P\in \mathcal{CH}$ and $D\in \mathcal{CH} \implies PD \in \mathcal{CH}$. 
\end{Lem}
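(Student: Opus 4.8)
The plan is to induct on the level $a$ of the permutation, proving the slightly stronger uniform statement: for every permutation $P\in\mathcal{CH}_a$ and every diagonal $D\in\mathcal{CH}$, the product $PD$ lies in $\mathcal{CH}$ at some finite level. The base case $a=1$ is immediate, since a permutation in $\mathcal{CH}_1=\mathcal{P}_n$ is in particular Clifford, and left multiplication by a Clifford does not change the level of a gate \cite{Zeng2008}; hence $PD$ sits at the same level as $D$.

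For the inductive step I would unpack the definition \eqref{CHdef}: to place $PD$ in the hierarchy it suffices to control the level of $(PD)Q(PD)^\dagger$ for every Pauli $Q$, and it is enough to take $Q=\vec X\vec Z$ since overall phases do not affect the level. The computation I would carry out is to commute the diagonal past the Pauli. Writing the group commutator $\Phi=D\vec X D^\dagger\vec X^\dagger$ so that $D\vec X D^\dagger=\Phi\vec X$, the key structural observation is that $\Phi$ is diagonal: $\vec X D^\dagger\vec X^\dagger$ is a conjugate of a diagonal matrix by a bit-flip permutation, hence diagonal, and a product of diagonal matrices is diagonal. Two facts from the diagonal theory then keep $\Phi$ in the hierarchy: conjugating a diagonal $\mathcal{CH}$ gate by a permutation merely permutes its (power-of-two root-of-unity) entries, and by the classification of \cite{Cui2017} the diagonal gates of $\mathcal{CH}$ are closed under multiplication.

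With this in hand I would rewrite
\[
(PD)Q(PD)^\dagger = (P\Phi P^\dagger)\,(PQP^\dagger),
\]
and then split $PQP^\dagger=(P\vec X P^\dagger)(P\vec Z P^\dagger)=P'D_2$, where $P'=P\vec X P^\dagger$ is again a genuine permutation and $D_2=P\vec Z P^\dagger$ is diagonal, both lying in $\mathcal{CH}_{a-1}$ by the definition of $\mathcal{CH}_a$. Absorbing the leftover diagonal factor $P\Phi P^\dagger$ through $P'$ (once more a permutation conjugate of a diagonal gate) and merging the resulting diagonal factors, the conjugate takes the form $P'\tilde D$ with $P'\in\mathcal{CH}_{a-1}$ a permutation and $\tilde D\in\mathcal{CH}$ diagonal. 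The inductive hypothesis at level $a-1$ then yields $P'\tilde D\in\mathcal{CH}$.

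Finally I would close the induction with a finiteness argument: there are only finitely many Paulis $Q$ up to phase, each conjugate $(PD)Q(PD)^\dagger$ lands at some finite level, and taking the maximum produces a finite $c$ with $(PD)Q(PD)^\dagger\in\mathcal{CH}_{c-1}$ for all $Q$, whence $PD\in\mathcal{CH}_c$. The main obstacle is the bookkeeping in the inductive step: one must verify that every factor that appears stays either diagonal or a permutation and that each stays in $\mathcal{CH}$, which rests squarely on the two diagonal facts above (invariance under permutation-conjugation and multiplicative closure of diagonal hierarchy gates). Everything else is routine rearrangement.
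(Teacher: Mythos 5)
Your proof is correct, but there is nothing in this paper to compare it against line by line: Lemma~\ref{gsc} is stated here as an imported result, cited to Anderson~\cite{Anderson2024}, and no proof is given in this paper (only the remark that the implication can be strengthened to an equivalence). Your argument therefore serves as a self-contained substitute for that citation, and it holds up. The induction on the permutation's level $a$ is well-founded, since $P' = P\vec{X}P^\dagger$ is again a permutation matrix and lies in $\mathcal{CH}_{a-1}$ by the definition in Eq.~(\ref{CHdef}). The commutator $\Phi = D\vec{X}D^\dagger\vec{X}^\dagger$ is diagonal and remains in $\mathcal{CH}$ by exactly the two facts this paper quotes from Cui~\etal~\cite{Cui2017}, namely Lemmas~\ref{diag} and~\ref{diagGroup} (note that you need the group property of Lemma~\ref{diagGroup}, not just closure under products, to place $D^\dagger$ in $\mathcal{CH}$, though finite order of the root-of-unity entries makes inverses automatic). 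Pushing $P\Phi P^\dagger$ through $P'$ and merging diagonals gives $(PD)Q(PD)^\dagger = P'\tilde{D}$ with $\tilde{D}$ diagonal and in $\mathcal{CH}$, which is precisely the shape your uniform inductive hypothesis needs, and the closing step — finitely many Paulis up to phase, take the maximum level, use nestedness of the levels — correctly converts ``each conjugate lies somewhere in $\mathcal{CH}$'' into ``$PD \in \mathcal{CH}_c$ for some finite $c$''; the base case via Lemma~\ref{LRcliff} from \cite{Zeng2008} is fine. It is worth observing that your central manipulation, splitting $DQD^\dagger$ into a diagonal commutator times the Pauli and then conjugating everything by the permutation, is the same algebra this paper uses in its proof of Lemma~\ref{piXd} for the special case $k=3$; what your proof adds is the observation that iterating this one trick down the hierarchy, with the diagonal-group lemmas doing the bookkeeping, recovers the external result the paper relies on, at the cost of carrying the ``some finite level'' quantifier through the induction rather than getting a sharp level bound.
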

This can actually be shown to be if and only if, but we do not need that here. 

From Zeng et al.~\cite{Zeng2008},
\begin{Lem}\label{LRcliff}
    A unitary (or gate), $U$, is in $\mathcal{CH}$ if and only if $C_L U C_R$ is in $\mathcal{CH}$ for all Clifford gates $C_L, C_R$.
\end{Lem}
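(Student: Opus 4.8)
The plan is to prove the biconditional by first isolating the essential invariant---that \emph{conjugation} by a Clifford preserves the exact level in the hierarchy---and only then reducing the general two-sided product $C_L U C_R$ to this case. Separating these is important, because two-sided multiplication does \emph{not} preserve level in general: for a Pauli $U\in\mathcal{CH}_1$ the product $C_L U C_R$ is merely a product of Cliffords, hence lands in $\mathcal{CH}_2$ rather than $\mathcal{CH}_1$. The clean quantity to induct on is therefore conjugation, not left/right multiplication.

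First I would establish the sub-lemma: if $V\in\mathcal{CH}_k$ and $C$ is Clifford, then $CVC^\dagger\in\mathcal{CH}_k$. I would prove this by induction on $k$. The base case $k=1$ is immediate, since conjugating a Pauli by a Clifford yields a Pauli, which is in $\mathcal{CH}_1$. For the inductive step, fix a Pauli $P$ and rewrite
$$(CVC^\dagger)\,P\,(CVC^\dagger)^\dagger = C\,V\,(C^\dagger P C)\,V^\dagger\,C^\dagger = C\,(V P' V^\dagger)\,C^\dagger,$$
where $P'=C^\dagger P C$ is again a Pauli. Since $V\in\mathcal{CH}_k$ we have $V P' V^\dagger\in\mathcal{CH}_{k-1}$, and by the induction hypothesis $C(V P' V^\dagger)C^\dagger\in\mathcal{CH}_{k-1}$; as $P$ was arbitrary, $CVC^\dagger\in\mathcal{CH}_k$.

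For the forward direction of the main claim, take $U\in\mathcal{CH}_k$ with $k\ge 2$, fix a Pauli $P$, and compute
$$(C_L U C_R)\,P\,(C_L U C_R)^\dagger = C_L\,U\,(C_R P C_R^\dagger)\,U^\dagger\,C_L^\dagger = C_L\,(U P' U^\dagger)\,C_L^\dagger,$$
with $P'=C_R P C_R^\dagger$ a Pauli. Then $U P' U^\dagger\in\mathcal{CH}_{k-1}$ because $U\in\mathcal{CH}_k$, and the sub-lemma applied to the outer conjugation by $C_L$ keeps it in $\mathcal{CH}_{k-1}$, so $C_L U C_R\in\mathcal{CH}_k$. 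The case $k=1$ is trivial, as a Pauli is Clifford and $C_L U C_R$ is then a product of Cliffords, lying in $\mathcal{CH}_2\subseteq\mathcal{CH}$. The converse follows by symmetry: $C_L^\dagger$ and $C_R^\dagger$ are Clifford and $U=C_L^\dagger(C_L U C_R)C_R^\dagger$, so applying the forward direction to $C_L U C_R$ gives $U\in\mathcal{CH}$. The main obstacle is purely conceptual rather than computational: one must recognize that the right-hand Clifford $C_R$ should be \emph{absorbed} by rewriting $C_R P C_R^\dagger$ as a fresh Pauli, which converts the two-sided product into a pure conjugation by $C_L$ at level $\mathcal{CH}_{k-1}$. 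A naive attempt to induct directly on $C_L U C_R$ does not close, since that expression never appears in conjugation form when substituted into the definition of the hierarchy.
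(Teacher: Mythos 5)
Your proof is correct. There is nothing in the paper to compare it against: the paper states this lemma as an imported result from Zeng~\etal~\cite{Zeng2008} and provides no proof of its own. Your argument is essentially the standard one from that reference --- first showing by induction on $k$ that Clifford conjugation preserves the level exactly, via $(CVC^\dagger)P(CVC^\dagger)^\dagger = C\,(VP'V^\dagger)\,C^\dagger$ with $P'=C^\dagger P C$ a Pauli, and then absorbing $C_R$ into the Pauli to reduce the two-sided product $C_L U C_R$ to a conjugation one level down --- and it in fact establishes slightly more than the lemma asks: for $k\ge 2$ the exact level is preserved, which is the stronger fact the paper quotes earlier (``multiplication by Clifford elements does not change the level''), while your $k=1$ case correctly isolates why only membership in $\mathcal{CH}$, not the level, survives there. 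The converse is fine as written; note it is also immediate from the lemma's universal quantifier by taking $C_L=C_R=I$, whereas your reduction $U=C_L^\dagger(C_L U C_R)C_R^\dagger$ proves the stronger ``for some $C_L, C_R$'' version.
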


From Cui et al.~\cite{Cui2017},
\begin{Lem}\label{diag}
For $D$ a diagonal gate and $P$ a permutation gate, $D\in \mathcal{CH}\implies PDP^{-1} \in \mathcal{CH}$ 
\end{Lem}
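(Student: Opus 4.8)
The plan is to reduce the statement to a claim about the \emph{spectrum} of $D$ alone, exploiting the fact that conjugating a diagonal matrix by a permutation does nothing but rearrange its diagonal entries. Let $\sigma$ denote the permutation of computational basis states implemented by $P$, and write $D = \mathrm{diag}(d_x)_x$. Then $PDP^{-1} = \mathrm{diag}(d_{\sigma^{-1}(x)})_x$, which is again diagonal and has exactly the same multiset of entries as $D$; only their positions along the diagonal change. Hence the whole lemma amounts to showing that $\mathcal{CH}$-membership of a diagonal gate is a function of its set of entries and is insensitive to how they are ordered.

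First I would invoke the classification of diagonal gates in $\mathcal{CH}$ from \cite{Cui2017} in its forward direction: if $D$ is diagonal and $D \in \mathcal{CH}_k$, then after fixing the global phase so that one entry equals $1$, every diagonal entry of $D$ is a $2^k$-th root of unity. In the notation introduced just before Lemma~\ref{D3/D2}, this reads $D \in Diag^n_k$. Since $P$ only permutes the entries, $PDP^{-1}$ has all of its entries among the same $2^k$-th roots of unity, so $PDP^{-1} \in Diag^n_k$ as well. Finally I would apply the converse half of the classification, recalled in the discussion around Lemma~\ref{D3/D2}, that every element of $Diag^n_r$ lies in the Clifford Hierarchy \cite{Cui2017}; taking $r = k$ yields $PDP^{-1} \in \mathcal{CH}$, as claimed. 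Note that the level of $PDP^{-1}$ may well exceed that of $D$, since a high-degree permutation can scramble the phase function, but only membership in $\mathcal{CH}$ is asserted.

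I expect the main obstacle to be precisely the forward direction of the Cui et al.\ classification, namely that a diagonal gate in $\mathcal{CH}$ must have all of its entries among the $2^r$-th roots of unity for a single common $r$; granting this, everything else is bookkeeping. This spectral reduction is also what makes the lemma hold for an \emph{arbitrary} permutation $P$, even one not itself in $\mathcal{CH}$: one cannot appeal to any ``conjugation by a Clifford preserves the level'' argument in the spirit of Lemma~\ref{LRcliff}, because $P$ need be neither Clifford nor in the hierarchy. Were one to avoid citing \cite{Cui2017}, the forward direction could in principle be established by induction on $k$: conjugating $D$ by a single-qubit Pauli $X_j$ (as in the manipulations of Lemma~\ref{piXd}) yields $X_j$ times the diagonal gate with entries $d_{x \oplus e_j}/d_x$, which must sit in $\mathcal{CH}_{k-1}$, forcing these ratios to be $2^{k-1}$-th roots of unity; chaining these constraints across all $j$ and carefully tracking the surviving power of $2$ pins the entries $d_x$ down to $2^k$-th roots of unity. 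Pushing that induction through cleanly is where the real work would lie.
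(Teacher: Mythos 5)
Your proposal is correct and takes essentially the same route as the paper: although the paper states Lemma~\ref{diag} as a citation to Cui~\etal, its appendix on diagonal gate groups proves the equivalent (contrapositive) statement that $\pi d \pi^\dagger \notin \mathcal{CH}$ whenever $d \notin \mathcal{CH}$ by exactly your spectral argument --- conjugation by a permutation preserves diagonality and the multiset of diagonal entries, and by the Cui~\etal\ classification a diagonal gate lies in $\mathcal{CH}$ precisely when its entries are $2^r$-th roots of unity. Your further remarks, that $P$ need not be Clifford or in $\mathcal{CH}$ (so no level-preservation argument in the spirit of Lemma~\ref{LRcliff} is available) and that only membership rather than level is preserved, also agree with how the paper uses this fact.
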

and
\begin{Lem}\label{diagGroup}
Diagonal gates in $\mathcal{CH}$ form a group. 
\end{Lem}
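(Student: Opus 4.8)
The plan is to prove the group property by induction on the Clifford Hierarchy level, using the recursive definition \eqref{CHdef} together with Lemma~\ref{LRcliff}. The key object is, for a diagonal gate $D$ and a Pauli $X$-string $\vec{X}$, the ``commutator'' $D_{\vec{X}} := D \vec{X} D^\dagger \vec{X}$. This is again diagonal, since $\vec{X}^2 = I$ and conjugation of a diagonal matrix by $\vec{X}$ merely permutes its diagonal entries, and (crucially) it is compatible with the group operations on diagonal gates.

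First I would establish a recursive characterization: a diagonal $D$ lies in $\mathcal{CH}_k$ if and only if $D_{\vec{X}} \in \mathcal{CH}_{k-1}$ for every Pauli $X$-string $\vec{X}$. Writing an arbitrary Pauli (up to phase) as $\vec{Z}\vec{X}$ and using that $D$ commutes with $\vec{Z}$, one gets $D(\vec{Z}\vec{X})D^\dagger = \vec{Z}\, D_{\vec{X}}\, \vec{X}$; since left/right multiplication by Paulis preserves the level (Lemma~\ref{LRcliff}), this Pauli conjugate is in $\mathcal{CH}_{k-1}$ exactly when $D_{\vec{X}}$ is. I expect this to be the main obstacle, precisely because $\mathcal{CH}_{k-1}$ is \emph{not} a group for $k-1 \ge 3$, so one cannot naively claim that a product of $\mathcal{CH}_{k-1}$ elements remains in $\mathcal{CH}_{k-1}$; the argument must be routed entirely through Clifford (here, Pauli) multiplication, where level-preservation is guaranteed.

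Next I would record the two identities $(DD')_{\vec{X}} = D_{\vec{X}} D'_{\vec{X}}$ and $(D^\dagger)_{\vec{X}} = (D_{\vec{X}})^{-1}$, both of which follow immediately from the fact that all diagonal matrices commute. With these the induction is routine. For the base case, the diagonal gates of $\mathcal{CH}_1$ are the phased $Z$-strings, which plainly form a group. For the inductive step, assume the diagonal gates of $\mathcal{CH}_{k-1}$ form a group; given diagonal $D, D' \in \mathcal{CH}_k$, the characterization gives $D_{\vec{X}}, D'_{\vec{X}} \in \mathcal{CH}_{k-1}$ for all $\vec{X}$, and the two identities then place $(DD')_{\vec{X}} = D_{\vec{X}}D'_{\vec{X}}$ and $(D^\dagger)_{\vec{X}} = (D_{\vec{X}})^{-1}$ in $\mathcal{CH}_{k-1}$ by the inductive hypothesis, whence $DD', D^\dagger \in \mathcal{CH}_k$.

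Finally, using the standard nesting $\mathcal{CH}_{k-1} \subseteq \mathcal{CH}_k$, any two diagonal gates in $\mathcal{CH}$ lie in a common level $\mathcal{CH}_k$, so closure and inverses at each fixed level upgrade to the statement that all diagonal gates in $\mathcal{CH}$ form a group. This yields a self-contained argument recovering the characterization of diagonal gates in $\mathcal{CH}$ from \cite{Cui2017}.
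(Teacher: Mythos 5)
Your proof is correct, but it takes a genuinely different route from the paper: the paper does not prove this lemma at all, it imports it from Cui~et~al.~\cite{Cui2017}, whose argument runs through a full classification of diagonal gates in $\mathcal{CH}$ --- every such gate is a product of $Z$-rotations by angles $\pi/2^{k_j}$ as in Eq.~\eqref{diagForm}, so the level-$k$ diagonal gates are exactly the explicitly generated group $\mathcal{D}_k^n$, and closure under products and inverses is read off from that normal form. Your induction on the level via the twisted commutator $D_{\vec{X}} = D\vec{X}D^\dagger\vec{X}$ avoids any classification: the characterization ``$D \in \mathcal{CH}_k$ iff $D_{\vec{X}} \in \mathcal{CH}_{k-1}$ for all $\vec{X}$'' is sound, the identities $(DD')_{\vec{X}} = D_{\vec{X}}D'_{\vec{X}}$ and $(D^\dagger)_{\vec{X}} = (D_{\vec{X}})^{-1}$ follow from commutativity of diagonal matrices (and of $\vec{X}D\vec{X}$ with diagonals), and you correctly route every multiplication through Pauli factors so that the failure of $\mathcal{CH}_{k-1}$ to be a group for $k-1\ge 3$ never bites. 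What the paper's (Cui's) route buys is the explicit structure the rest of the paper relies on --- generators, the level formula $k=\max_j k_j$, and the groups $\mathcal{D}_k^n$ and $Diag_k^n$ of Appendix~\ref{sec:diag} --- whereas your route buys self-containedness and minimal machinery, working directly from the recursive definition~\eqref{CHdef}. Two small corrections: first, Lemma~\ref{LRcliff} as stated only preserves membership in the union $\mathcal{CH}$, not the level; for the step ``$\vec{Z}D_{\vec{X}}\vec{X} \in \mathcal{CH}_{k-1}$ iff $D_{\vec{X}} \in \mathcal{CH}_{k-1}$'' you need the level-preserving version of Clifford multiplication, which the paper states in Section~2 and attributes to \cite{Zeng2008}. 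Second, your closing claim that the argument ``recovers the characterization of diagonal gates in $\mathcal{CH}$'' from \cite{Cui2017} overreaches: you recover only the group property, not the classification by rotation angles, which is strictly stronger and is what the paper actually uses elsewhere (e.g., in Lemma~\ref{D3/D2}).
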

Now, we will prove the main result of this section.

\begin{theorem}
    All 1-wire-mismatch permutations are in $\mathcal{CH}$.
\end{theorem}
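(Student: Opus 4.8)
The plan is to reduce an arbitrary 1-wire-mismatch permutation to a normal form and then diagonalize the single mismatched wire. First, by Lemma~\ref{LRcliff} I may conjugate by Clifford permutations, so I relabel the unique mismatched wire as $w$ and split the remaining $n-1$ wires into a control-only set $C$ and a target-only set $T$. Every multi-controlled NOT then has its target either on $w$ (with controls in $C$) or on a wire of $T$ (with controls in $C\cup\{w\}$). Because the targets in $T$ are disjoint from all control wires, the gates targeting $T$ mutually commute; the gates targeting $w$ also mutually commute (common target, diagonal controls). Using these commutations I would push all the $w$-flipping gates to one side, absorbing the resulting substitution $w\mapsto w\oplus f(c)$ into the $T$-gates, to obtain a normal form $P = A'B'$, where $A'$ flips $w$ by a Boolean function of $C$ and $B'$ permutes $T$ controlled by $(C,w)$. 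Each factor is individually 0-wire-mismatch and hence in $\mathcal{CH}$ by Lemma~\ref{zeroWire}; the whole difficulty is that a product of two $\mathcal{CH}$ elements need not lie in $\mathcal{CH}$.

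Next I would expose the controlled structure of $B'$. Writing $\Pi_0,\Pi_1$ for the (0-wire-mismatch) permutations that $B'$ applies to $C\cup T$ when $w=0,1$, one has $B' = C_w(\Pi_1\Pi_0^{-1})\,(I\otimes\Pi_0)$, where the rightmost factor is a permutation not touching $w$ and so commutes with $A'$. Hence, up to a $w$-free 0-wire-mismatch permutation (which I would carry along using Lemma~\ref{gsc}), it suffices to treat $P_0 = A'\,C_w(V)$ with $V=\Pi_1\Pi_0^{-1}\in\mathcal{CH}$. The key idea is to diagonalize the mismatched wire: since $A'$ is a product of controlled-$X_w$ gates, $A' = H_w D H_w$ where $D$ is the corresponding product of controlled-$Z_w$ gates, i.e. a diagonal gate in $\mathcal{CH}$ (Lemma~\ref{diagGroup}). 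Conjugating by $H_w$ and invoking Lemma~\ref{LRcliff} reduces the claim to $D\,E\in\mathcal{CH}$, where $E = H_w C_w(V) H_w$ is an $X$-basis-controlled copy of $V$ and is in $\mathcal{CH}$ as a Clifford conjugate of a permutation.

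The main obstacle is exactly this last product $DE$: the factor $E$ is no longer a computational-basis permutation, so Lemma~\ref{gsc} does not apply and I cannot simply quote ``permutation times diagonal.'' My plan to close this gap is an induction on the number of genuinely $w$-controlled gates in $B'$, peeling one gate at a time. For each Pauli string $\vec X$ I would track $P_0\,\vec X\,P_0^{-1}$ using the manipulation behind Lemma~\ref{piXd}, showing that conjugation by a single added $w$-controlled NOT alters the Pauli image only by a diagonal factor of the form $\vec X d\vec X d^{-1}$ that lies in $\mathcal{CH}$ and, by Lemma~\ref{diag}, remains in $\mathcal{CH}$ after conjugation by the surrounding permutation. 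Controlling how far the hierarchy level can climb under this peeling---in particular guaranteeing that it stays finite and uniform over all Paulis---is the delicate point; once that bound is in hand, the base case (no $w$-controls, which is 0-wire-mismatch) together with Lemma~\ref{gsc} and the carried $w$-free factor finishes the proof.
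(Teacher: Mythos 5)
Your setup is sound and essentially matches the paper's: the factorization of a 1-wire-mismatch permutation into $A'$ (gates targeting the mismatched wire $w$, controlled on $C$) times $B'$ (gates targeting $T$, controlled on $C\cup\{w\}$), and the further splitting $B' = C_w(\Pi_1\Pi_0^{-1})\,(I\otimes\Pi_0)$, are all correct. The genuine gap is exactly the one you flag yourself, and it is fatal as written: you conjugate by Hadamard on the \emph{mismatched} wire $w$, which turns $A'$ into a diagonal $D$ but simultaneously turns the $w$-controlled factor into $E = H_w C_w(V) H_w$, which is neither a permutation nor a diagonal gate. At that point none of the available tools apply: Lemma~\ref{gsc} needs a permutation times a diagonal, Lemma~\ref{diag} needs a diagonal conjugated by a permutation, and Lemma~\ref{diagGroup} needs a product of diagonals. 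Your proposed repair --- peeling off $w$-controlled gates one at a time and bounding the hierarchy-level climb of the accumulated $\vec X d \vec X d^{-1}$ factors uniformly over all Pauli strings --- is precisely the hard unsolved part, not a routine verification; ``once that bound is in hand'' is where the proof ends. There is also a secondary unjustified step: you ``carry along'' the factor $I\otimes\Pi_0$ using Lemma~\ref{gsc}, but that lemma concerns a permutation times a \emph{diagonal} gate; $\mathcal{CH}$ is not closed under products, so membership of $P_0$ in $\mathcal{CH}$ does not by itself give membership of $P_0(I\otimes\Pi_0)$.

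The paper's proof differs in one decisive choice: it conjugates by Hadamards on the \emph{target-support} wires $T$ rather than on $w$. Under $H_T$, every gate whose target lies in $T$ --- i.e.\ every $w$-controlled gate, and also your factor $I\otimes\Pi_0$ --- becomes a multi-controlled-$Z$, a diagonal gate with $\pm 1$ entries, which Cui et al.\ showed lies in $\mathcal{CH}$; meanwhile the $w$-targeted gates (your $A'$) are untouched, since their controls lie in $C$ only (a control on $T$ would create a second mismatched wire). The paper then pushes these diagonals rightward through the remaining permutations using Lemma~\ref{diag} and collects them with Lemma~\ref{diagGroup}, leaving a 0-wire-mismatch permutation (Lemma~\ref{zeroWire}) times a $\mathcal{CH}$ diagonal, which Lemmas~\ref{gsc} and~\ref{LRcliff} finish off. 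In your own normal form the fix is one line: conjugate $P = A'\,C_w(V)\,(I\otimes\Pi_0)$ by $H_T$ instead of $H_w$, obtaining
\begin{equation*}
H_T P H_T \;=\; A' \,\bigl(H_T C_w(V) H_T\bigr)\bigl(H_T (I\otimes\Pi_0) H_T\bigr),
\end{equation*}
where both right-hand factors are diagonal with $\pm 1$ entries, hence in $\mathcal{CH}$, their product is in $\mathcal{CH}$ by Lemma~\ref{diagGroup}, and $A'$ is a 0-wire-mismatch permutation; Lemmas~\ref{gsc} and~\ref{LRcliff} then conclude. In short: diagonalize the controlled side, not the targeted side, of the mismatched wire.
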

\begin{proof}
For a 1-wire-mismatch permutation, we will refer to the single wire that has both controls and targets as the mismatched wire. Note that every gate with a control on the mismatched wire has a target on some other wire.  We define the target support as the set of wires with targets from gates with a control on the mismatched wire. Define $P_0$ as the set of gates that have no support on the mismatched wire and no support (which would necessarily be a target) on the target support. Define $P_{targ}$ as the set of gates with no support on the mismatched wire, but with (target) support on a target support wire. W.l.o.g. we can write any 1-wire-mismatch as follows:
\begin{equation*}
    P_1 = \vec{X} P_0 P_{targ} P_{T_1} P_{C_1} P_{T_2} P_{C_2}\cdots P_{T_k}P_{C_k}. 
\end{equation*}
We have pulled all NOT (Pauli $X$) gates to the left by toggling controls. Gates in $P_0$ and $P_{targ}$ commute with all $P$ gates and can, therefore, be pulled to the left. The gates $P_{T_i}$ and $P_{C_i}$ are gates with support on the mismatched wire. These gates do not, in general, commute so we have grouped them in sets that share a target, $T_i$, or control, $C_i$, on the mismatched wire. We have written this expression with $P_{T_1}$ before $P_{C_1}$ in the product. This is not generally true, however, our proof is easily modified for the other case. 

Observe that all gates with targets on target support wires can be diagonalized by Hadamard gates on the target support wires. These diagonal gates individually have entries which are $\pm 1$ which Cui et al.~$\cite{Cui2017}$ showed are in $\mathcal{CH}$. Products of these diagonal gates are in $\mathcal{CH}$ by Lemma.~\ref{diagGroup} and we denote them by $D_{C_i}$. Then, we have that
\begin{equation*}
    P_1 = \vec{X} H_{targ} P_0 D_{targ} P_{T_1} D_{C_1} P_{T_2} D_{C_2}\cdots P_{T_k}D_{C_k} H_{targ}. 
\end{equation*}
Now, we proceed as follows:
\begin{equation*}
\begin{gathered}
    P_1 = \vec{X} H_{targ} P_0 P_{T_1}(P_{T_1}^{-1}D_{targ} P_{T_1} D_{C_1}) P_{T_2} D_{C_2}\cdots P_{T_k}D_{C_k} H_{targ}\\
    =  \vec{X} H_{targ} P_0 P_{T_1} P_{T_2} (P_{T_2}^{-1}(P_{T_1}^{-1}D_{targ} P_{T_1} D_{C_1}) P_{T_2} D_{C_2})\cdots P_{T_k}D_{C_k} H_{targ}\\
     \vdots\\
    =  \vec{X} H_{targ} P_0 P_{T_1} P_{T_2} \cdots P_{T_k} D_{[1\cdots k-1]}D_{C_k} H_{targ}.\\
\end{gathered}
\end{equation*}
The terms within the parenthesis indicate a diagonal gate. All terms $P_{T_i}^{-1}D_{C_j} P_{T_i}$ are in $\mathcal{CH}$ from Lemma.~\ref{diag} and because $D_{C_j}$ is in the Clifford Hierarchy. Additionally, products of diagonal gates in $\mathcal{CH}$ are in $\mathcal{CH}$ from Lemma.~\ref{diagGroup}. We indicate the product of these diagonal gates as: $D_{[1\cdots k-1]}$. With another application of Lemma.~\ref{diagGroup} we see that $$D = D_{[1\cdots k-1]}D_{C_k} \in \mathcal{CH}.$$ 

Recall that $P_{T_i}$ had mismatch only on the mismatched wire. But now, all permutations have only targets on the mismatched wire and we see that $P=P_0 P_{T_1} P_{T_2} \cdots P_{T_k}$ has 0-wire mismatch and is in $\mathcal{CH}$ by Lemma.~\ref{zeroWire}.

Finally, we have that :
\begin{equation*}
    P_1 = \vec{X}H_{targ}PDH_{targ}.
\end{equation*}

Since $\vec{X}$ and $H_{targ}$ are Clifford, we can use  Lemma.~\ref{LRcliff} to show that $$P_1 \in \mathcal{CH} \iff PD \in \mathcal{CH}.$$
and since $P\in \mathcal{CH}$ and $D\in \mathcal{CH}$ it follows from Lemma.~\ref{gsc} that $P_1 \in \mathcal{CH}$.
\end{proof}

\begin{corollary}\label{Corr:Fam}
    Any permutation, $P$, that can be written as: $$P = P_{C_L} P_{\le 1} P_{C_R}$$ where $P_{C_L}$ and $P_{C_R}$ are Clifford permutations and $P_{\le 1}$ is a general permutation with wire mismatch less than or equal to 1, is in the Clifford Hierarchy.
\end{corollary}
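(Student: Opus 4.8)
The plan is to reduce the corollary directly to the theorem just proved, together with Lemma~\ref{LRcliff}, via a short case split on the wire mismatch of the central factor $P_{\le 1}$. First I would observe that, by hypothesis, $P_{\le 1}$ is a permutation whose wire mismatch is either $0$ or $1$. In the former case Lemma~\ref{zeroWire} gives $P_{\le 1} \in \mathcal{CH}$ immediately; in the latter case the theorem just established (all 1-wire-mismatch permutations are in $\mathcal{CH}$) gives $P_{\le 1} \in \mathcal{CH}$. Either way, the central factor lies in the Clifford Hierarchy.

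Next I would use that $P_{C_L}$ and $P_{C_R}$ are Clifford permutations, hence Clifford gates. Writing $P = P_{C_L} P_{\le 1} P_{C_R}$ exhibits $P$ as a left/right Clifford dressing of $P_{\le 1}$, so by Lemma~\ref{LRcliff} membership in $\mathcal{CH}$ is invariant under this dressing: $P \in \mathcal{CH}$ if and only if $P_{\le 1} \in \mathcal{CH}$. Combining with the previous step yields $P \in \mathcal{CH}$, as claimed.

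I do not expect any real obstacle here; the only point requiring a moment of care is verifying that the hypothesis ``wire mismatch $\le 1$'' genuinely partitions into exactly the two cases handled by Lemma~\ref{zeroWire} and the main theorem, so that no admissible $P_{\le 1}$ escapes the argument. All of the substantive work is carried by the theorem, and this corollary merely packages it together with the Clifford-invariance of the hierarchy into the convenient three-factor form $P_{C_L} P_{\le 1} P_{C_R}$ that is used repeatedly to establish membership of the permutation families in the main text.
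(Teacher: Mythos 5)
Your proof is correct and matches the paper's intent exactly: the corollary is stated without an explicit proof precisely because it follows immediately from the 1-wire-mismatch theorem (together with Lemma~\ref{zeroWire} for the mismatch-$0$ case) and the Clifford-invariance of $\mathcal{CH}$ membership from Lemma~\ref{LRcliff}, which is the argument you give. The case split on whether the mismatch of $P_{\le 1}$ is $0$ or $1$ is handled correctly, so nothing is missing.
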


\section{Partial proof that notions of affine and Clifford equivalence of permutations are equivalent }\label{affineproof}
Definitions:
\begin{enumerate}
\item $\mathcal{C}$ is the $n$-qubit Clifford group.
\item $\mathcal{P}_C$ is the group of $n$-qubit permutation matrices in the $n$-qubit Clifford group. This group is generated by all CNOTs and Pauli $X$ strings. Note that for all $n$, $\mathcal{P}_C$ is a strict subgroup of $\mathcal{C}$. 
\item $\mathcal{P}$ is the group of all $n$-qubit unitary permutation matrices. There are $2^n!$ such binary unitary matrices. Note that when $n\ge 3$ this group contains many non-Clifford elements. And so for $n \ge 3$, $\mathcal{P}_C$ is a strict subgroup of $\mathcal{P}$.
\end{enumerate}

Our goal is to prove that the following equivalence relations are equivalent:

Equivalence relation 1: 
\begin{align*}
&P_1 \equiv_1 P_2 \iff \text{there exists some } Q_L,Q_R \in \mathcal{P}_C \text{ such that } P_1 = Q_L P_2 Q_R.\\ &\text{ Here } P_1,P_2 \in \mathcal{P}.
\end{align*}

Equivalence relation 2: 
$$P_1 \equiv_2 P_2 \iff \text{there exists some } C_L,C_R \in \mathcal{C} \text{ such that } P_1 = C_L P_2 C_R. \text{ Here } P_1,P_2 \in \mathcal{P}.$$
 
Since $\mathcal{P}_C \subset \mathcal{C}$, $P_1 \equiv_1 P_2 \implies P_1 \equiv_2 P_2$. So, the question boils down to: does $P_1 \equiv_2 P_2 \implies P_1 \equiv_1 P_2$? We conjecture that this is true, but do not manage to prove it. However, we prove the following intermediate case. 
  
I'll introduce an intermediate equivalence relation.

Equivalence relation 1.5: 
$$P_1 \equiv_{1.5} P_2 \iff \text{there exists some } G_L,G_R \in \mathcal{G}_C \text{ such that } P_1 = G_L P_2 G_R.$$ 

Here $P_1,P_2 \in \mathcal{P}$ as before and $G_L,G_R$ are generalized permutations (or monomial matrices) in the Clifford group. This group is generated by $\mathcal{G}_C = \mathcal{P}_C \cup \mathcal{D}_C$. $\mathcal{D}_C$ is the diagonal Clifford group. It is easy to see that $\mathcal{G}_C$ is a strict subgroup of the Clifford group since no Hadamard gates are contained in the group. 

As before, since $\mathcal{P}_C \subset \mathcal{G}_C \subset \mathcal{C}$, it must be the case that $P_1 \equiv_1 P_2 \implies P_1 \equiv_{1.5} P_2 \implies P_1 \equiv_2 P_2$.

Now, we will look at the conditions for $G_L P_1 G_R \in \mathcal{P}$. This is a necessary condition for any $P_1 \equiv_{1.5} P_2$. 

First, note that any element of $G\in\mathcal{G}_C$ can be written as a product $AD$ where $A\in \mathcal{P}_C$ and $D$ is a diagonal Clifford gate. This is also true for generalized permutations with the full permutation group and any diagonal matrix group.  Then, $$G_L P_1 G_R = A_L D_L P_1 A_R D_R = A_L P_1 (P_1^{-1} D_L P_1) A_R D_R = [A_L P_1 A_R] [A_R^{-1} P_1^{-1} D_L P_1 A_R D_R].$$ 

Since a diagonal matrix conjugated by a permutation matrix is always a diagonal matrix we see that the terms in the first square bracket are all permutations and therefore the product is a permutation. And since $A_R^{-1} P_1^{-1} D_L P_1 A_R$ is a diagonal matrix, we see that the product of all terms in the second square bracket is a diagonal matrix. For $G_L P_1 G_R$ to be a permutation, the diagonal part must be Identity since this is the only element in the union of diagonal gates and permutations. 

So, $A_R^{-1} P_1^{-1} D_L P_1 A_R D_R = I \implies D_R = A_R^{-1} P_1^{-1}D_L^{-1} P_1 A_R$.

Finally, we can substitute this condition into our equivalence relation:  
\begin{align*}
    G_L P_1 G_R = P_2 &\implies A_L D_L P_1 A_R D_R = P_2\\ &\implies A_L D_L P_1 A_R (A_R^{-1} P_1^{-1}D_L^{-1} P_1 A_R) = P_2\\ &\implies A_L P_1 A_R = P_2.
\end{align*}

We have now proven that $P_1 \equiv_{1.5} P_2 \implies P_1 \equiv_{1} P_2$.

And from our earlier discussion, we can see that $P_1 \equiv_1 P_2 \iff P_1 \equiv_{1.5} P_2$.

\section{Properties of Diagonal Gate Groups in $\mathcal{CH}$} 

\label{sec:diag}

\begin{figure}[!h]
     \centering
     \begin{subfigure}
         \centering
         \includegraphics[scale=0.44]{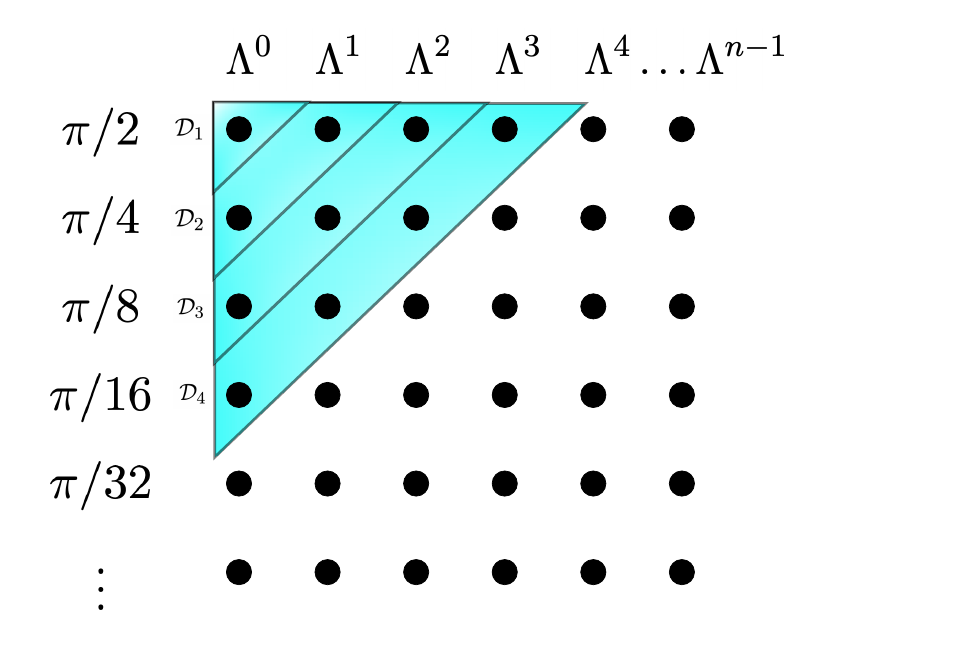}
     \end{subfigure}
     \hfill
     \begin{subfigure}
         \centering
         \includegraphics[scale=0.44]{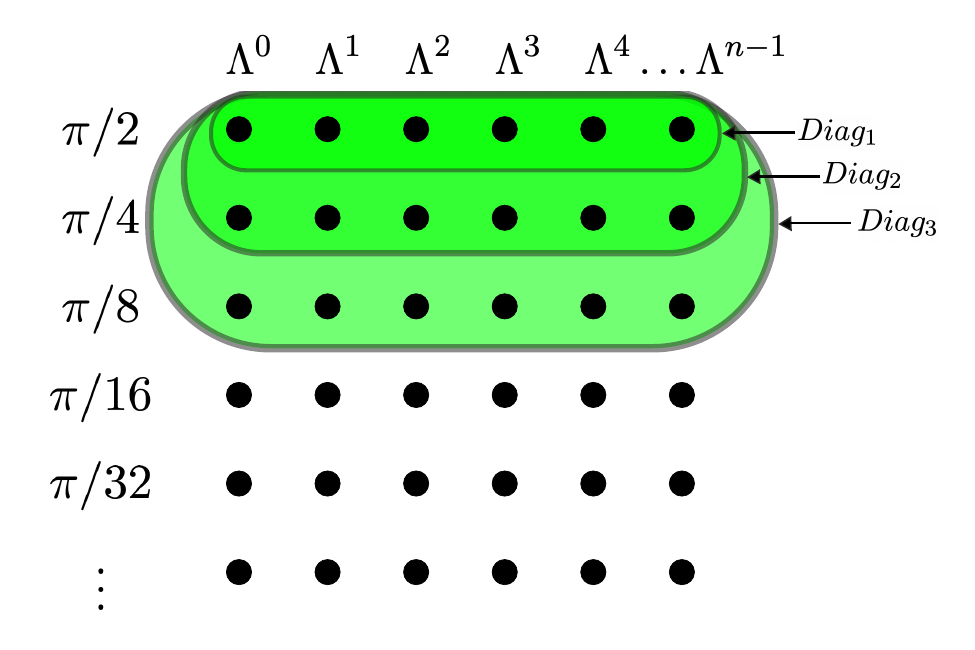}
     \end{subfigure}
        \caption{Comparison of diagonal gate groups in $\mathcal{CH}$. Points in figures correspond to controlled-$Z$ rotations by angle $\pi/2^k$. The number of controls is labeled across the top and the rotation angle is on the left. The groups are generated by these multi-controlled rotations. The figure on the left shows the family of groups, $\mathcal{D}_k$, which are the diagonal gates at level $k$ in $\mathcal{CH}$. The figure on the right shows the groups, $Diag_k$, of matrices with entries in the $2^k$th root of unity. Here the gates act on up to $n$ qubits and the generators are on all appropriate sets of qubits. For example, the $\Lambda^2(Z\left[\frac{\pi}{2}\right])$ ($CCZ$ gate) generators are on all $\binom{n}{3}$ qubits. Each group can be generated by the gates (on all appropriate sets of qubits) which are the `lowest' points in each column in the shaded region corresponding to that group.}
        \label{fig:diagGroups}
\end{figure}

In Cui \etal \cite{Cui2017} they classified all the diagonal gates in the Clifford Hierarchy for qubits and for qudits as well. We find that it is helpful to view these gates via three complementary pictures. The first is the $Z$ rotation picture which readily shows what level in $\mathcal{CH}$ an element or group of elements is in. In the $Z$ rotation picture, we write a diagonal element, $d$, as a product of $Z$ rotations as follows:

\begin{equation}\label{diagForm}
    d=\prod_j\exp \left( i\frac{\alpha_j \pi}{2^{k_j}}Z_j \right).
\end{equation}
Here $\alpha_j$ is an integer, $k_j$ is a positive integer, and $Z_j$ is a Pauli $Z$ string. A diagonal gate is in $\mathcal{CH}$ iff it can be written this way and $d$ is in the $k=\max k_j$ level in the Hierarchy. In Cui \etal\cite{Cui2017}, they showed that all $n$-qubit $Z$ rotations by a fixed angle $\pi/2^k$ generate the group $\mathcal{D}_k^n$. This is the group of diagonal gates on $n$ qubits in $\mathcal{CH}_k$. Note that all $Z$ rotations by a fixed angle do not necessarily correspond to distinct elements in the group. The group properties are most easily seen in this picture, and we include some results pertaining to the group $\mathcal{D}_k^n$ here. 

\begin{Lem}
For any $d \in \mathcal{D}_k^n$, $d^2 \in \mathcal{D}_{k-1}^n$. 
\end{Lem}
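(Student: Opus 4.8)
The plan is to work entirely within the $Z$-rotation picture of \eqref{diagForm}. Since $d \in \mathcal{D}_k^n \subseteq \mathcal{CH}_k$, the classification of Cui \etal\ guarantees that $d$ can be written as $d=\prod_j\exp\!\left(i\frac{\alpha_j\pi}{2^{k_j}}Z_j\right)$ with integer $\alpha_j$, positive integers $k_j$, and $Z_j$ Pauli $Z$ strings, where $\max_j k_j \le k$. The key structural fact I would exploit is that every $Z_j$ is a tensor product of $Z$ and $I$ operators, so all the exponentials mutually commute; consequently squaring distributes across the product without any reordering subtleties, and each individual factor squares cleanly since $e^{A}e^{A}=e^{2A}$ for a single operator $A$.

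Next I would square term by term:
\begin{equation*}
d^2 = \prod_j \exp\!\left(i\frac{\alpha_j\pi}{2^{k_j}}Z_j\right)^{\!2} = \prod_j \exp\!\left(i\frac{\alpha_j\pi}{2^{k_j-1}}Z_j\right).
\end{equation*}
Each factor is again of the form appearing in \eqref{diagForm}, now with exponent parameter $k_j-1$. The maximum over all factors is $\max_j(k_j-1) = (\max_j k_j) - 1 \le k-1$. Since form \eqref{diagForm} with maximal parameter at most $k-1$ characterizes exactly the diagonal gates in $\mathcal{CH}_{k-1}$, this exhibits $d^2$ as an element of $\mathcal{D}_{k-1}^n$, as claimed.

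The only point requiring care is the boundary behaviour of factors with $k_j = 1$, for which squaring yields $\exp\!\left(i\alpha_j\pi Z_j\right)$, a diagonal Pauli (possibly just a global phase). I would note that such terms lie in $\mathcal{CH}_1$ and hence, via the inclusions $\mathcal{D}_1^n \subseteq \mathcal{D}_2^n \subseteq \cdots$, do not obstruct membership in $\mathcal{D}_{k-1}^n$ for $k \ge 2$; for $k=1$ one simply observes that $d^2$ is $\pm I$. I expect the main obstacle to be expository rather than mathematical: because \eqref{diagForm} is not a unique representation, the argument should be phrased in terms of the \emph{existence} of a representation of $d^2$ with maximal parameter at most $k-1$, which is all that membership in $\mathcal{D}_{k-1}^n$ requires, rather than asserting anything about a canonical decomposition.
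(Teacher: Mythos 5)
Your proof is correct and takes essentially the same approach as the paper, whose entire argument is the observation that squaring the $Z$-rotation product in Equ.~\ref{diagForm} lowers each exponent $k_j$ by one. Your additional care about commutativity of the factors, non-uniqueness of the representation, and the $k_j=1$ boundary case simply fills in details the paper leaves implicit.
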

This can be easily seen by looking at the product of $Z$ rotations in Equ.~\ref{diagForm}. A similar observation was made in Campbell and Howard\cite{Campbell2017}.

\begin{Lem}
For any diagonal gates $d'\notin \mathcal{CH}$ and $d\in\mathcal{CH}$, $d'd \notin \mathcal{CH}$. 
\end{Lem}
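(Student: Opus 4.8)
The plan is to argue by contraposition, leaning entirely on the group structure of diagonal gates in $\mathcal{CH}$ established in Lemma~\ref{diagGroup}. The statement is logically equivalent to: if $d$ is diagonal and in $\mathcal{CH}$, and the product $d'd$ happens to lie in $\mathcal{CH}$ (with $d'$ diagonal), then $d'$ itself must be in $\mathcal{CH}$. So I would assume for contradiction that $d'd \in \mathcal{CH}$ and derive $d' \in \mathcal{CH}$, contradicting the hypothesis $d' \notin \mathcal{CH}$.

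The first step is to observe that since $d \in \mathcal{CH}$ is diagonal, the group property (Lemma~\ref{diagGroup}) guarantees that its inverse $d^{-1}$ is also a diagonal gate in $\mathcal{CH}$; this is exactly the existence-of-inverses axiom applied within the group of diagonal gates in the hierarchy. Next, under the contradiction hypothesis, both $d'd$ and $d^{-1}$ are diagonal gates in $\mathcal{CH}$, so their product is again in $\mathcal{CH}$ by closure. Since diagonal matrices commute, we have $(d'd)d^{-1} = d'(dd^{-1}) = d'$, and hence $d' \in \mathcal{CH}$, which is the desired contradiction. This establishes $d'd \notin \mathcal{CH}$.

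There is essentially no hard step here: the entire argument is the single algebraic identity $d' = (d'd)d^{-1}$ combined with the group axioms. The only point that requires any care is making explicit that Lemma~\ref{diagGroup} supplies \emph{both} closure under multiplication and closure under inversion within the diagonal gates of $\mathcal{CH}$, since both are used. One should also note that $d'd$ is automatically diagonal (a product of diagonal matrices), so it is a legitimate candidate member of the diagonal subgroup, and that commutativity of diagonal matrices is what lets us isolate $d'$ on the right. If the group property were only stated as closure under products without inverses, the main obstacle would be re-deriving the existence of $d^{-1} \in \mathcal{CH}$; but as stated the lemma already gives a group, so no additional work is needed.
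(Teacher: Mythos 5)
Your proof is correct and is essentially identical to the paper's own argument: the paper also assumes $d'd \in \mathcal{CH}$, writes $d' = (d'd)\,d^\dagger$, and invokes group closure of the diagonal gates in $\mathcal{CH}$ to reach the same contradiction. Your only additions are cosmetic --- making explicit the use of inverses and commutativity --- which the paper leaves implicit.
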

\begin{proof}
Assume that $d'd\in\mathcal{CH}$. Then, $d'd=d_2 \implies d'=d_2 d^\dagger$. And since the diagonal gates on the right-hand side are in $\mathcal{CH}$ via group closure of $\mathcal{D}^n_k$, we have that $d' \in \mathcal{CH}$, a contradiction. 
\end{proof}

\begin{Lem}
For any $d\notin \mathcal{CH}$, we have that $\pi d \pi^\dagger \notin \mathcal{CH}$ for any permutation $\pi$. 
\end{Lem}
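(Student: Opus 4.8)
The plan is to prove the contrapositive by a one-line contradiction built on Lemma~\ref{diag}. Suppose, contrary to the claim, that there is a permutation $\pi$ with $\pi d \pi^\dagger \in \mathcal{CH}$. First I would observe that $d' \equiv \pi d \pi^\dagger$ is again a diagonal gate, since conjugating a diagonal matrix by a permutation matrix only relabels (permutes) its diagonal entries; this is exactly the spectrum-preserving property remarked upon before Lemma~\ref{D3/D2}. Hence $d'$ is a \emph{diagonal} member of $\mathcal{CH}$, which is precisely the hypothesis required to invoke Lemma~\ref{diag}.

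Next I would apply Lemma~\ref{diag} to the diagonal gate $d'$ with the permutation $P = \pi^{-1}$. Because $\pi$ is a permutation matrix we have $\pi^\dagger = \pi^{-1}$, so $P d' P^{-1} = \pi^{-1}(\pi d \pi^\dagger)\pi = d$. Lemma~\ref{diag} then forces $d = P d' P^{-1} \in \mathcal{CH}$, contradicting the assumption $d \notin \mathcal{CH}$. Therefore no such $\pi$ can exist, and $\pi d \pi^\dagger \notin \mathcal{CH}$ for every permutation $\pi$.

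I do not expect a genuine obstacle here: the entire content of the argument is the remark that Lemma~\ref{diag}, though stated as a one-way implication, is effectively reversible under conjugation by a permutation, because the inverse of a permutation is again a permutation and can be used to undo the original conjugation. The only two points worth stating carefully are that $\pi d \pi^\dagger$ remains diagonal (so the hypothesis of Lemma~\ref{diag} is literally met) and that $\pi^\dagger = \pi^{-1}$. This is structurally the same contradiction used in the immediately preceding lemma on products $d'd$, where group closure of $\mathcal{D}^n_k$ played the role that Lemma~\ref{diag} plays here.
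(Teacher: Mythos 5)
Your proof is correct, but it takes a genuinely different route from the paper's. The paper argues directly from the spectral characterization of diagonal gates in $\mathcal{CH}$: a diagonal gate lies in $\mathcal{CH}$ if and only if its eigenvalues (its diagonal entries, cf.\ Equ.~\ref{diagForm}) are $2^k$th roots of unity; since conjugation by a permutation maps diagonal gates to diagonal gates and merely permutes the diagonal entries, any ``bad'' eigenvalue of $d\notin\mathcal{CH}$ survives conjugation, so $\pi d \pi^\dagger\notin\mathcal{CH}$. You instead observe that the lemma is, up to contraposition and the substitution $\pi\mapsto\pi^{-1}$, logically equivalent to Lemma~\ref{diag}: assuming $\pi d \pi^\dagger\in\mathcal{CH}$, noting it is still diagonal, and conjugating back by the permutation $\pi^{-1}$ forces $d\in\mathcal{CH}$, a contradiction. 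This is sound, and there is no circularity, since Lemma~\ref{diag} is an external result of Cui et al.\ that the paper does not derive from the present lemma. What each approach buys: yours is shorter and makes transparent that the statement adds nothing beyond Lemma~\ref{diag} together with closure of permutations under inversion; the paper's is self-contained and exhibits the underlying invariant (the spectrum) that obstructs membership, which also illuminates the nearby remark that conjugation by a non-Clifford permutation can change the \emph{level} of a diagonal gate in the Hierarchy while never changing \emph{membership} in $\mathcal{CH}$ --- a distinction your black-box reduction leaves implicit.
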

\begin{proof} For any diagonal gate, $d$, conjugation by permutations does not change its eigenvalues. A diagonal gate is in $\mathcal{CH}$ (see Equ.~\ref{diagForm}) iff it has eigenvalues which are $2^k$ roots of unity. A diagonal gate $d\notin \mathcal{CH}$ must therefore have at least one eigenvalue which is not of this form. Since conjugation by permutations does not change this eigenvalue and maps diagonal gates to diagonal gates, we conclude that $\pi d \pi^\dagger \notin \mathcal{CH}$.
\end{proof}

\begin{Lem}
For any $d\notin \mathcal{CH}$, we have that $d^2 \notin \mathcal{CH}$. 
\end{Lem}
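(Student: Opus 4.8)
The plan is to argue by contrapositive and to reduce everything to the eigenvalue characterization of diagonal gates in $\mathcal{CH}$ that is already in play in the preceding lemmas. Recall from Eq.~\ref{diagForm} (Cui \etal) that a diagonal unitary lies in $\mathcal{CH}$ if and only if every one of its diagonal entries is a $2^k$-th root of unity for a common nonnegative integer $k$. So rather than showing $d\notin\mathcal{CH}\implies d^2\notin\mathcal{CH}$ directly, I would establish the contrapositive: if $d^2\in\mathcal{CH}$ then $d\in\mathcal{CH}$.

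First I would write $d=\mathrm{diag}(\lambda_1,\dots,\lambda_{2^n})$; since $d$ is a diagonal gate (unitary) each $|\lambda_s|=1$. Then $d^2=\mathrm{diag}(\lambda_1^2,\dots,\lambda_{2^n}^2)$, and the assumption $d^2\in\mathcal{CH}$ gives, via the characterization above, a single $k$ with $\lambda_s^2=e^{2\pi i m_s/2^k}$ for integers $m_s$.

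The heart of the argument is the elementary observation that a square root of a power-of-two root of unity is again a power-of-two root of unity. Taking square roots yields $\lambda_s=\pm e^{2\pi i m_s/2^{k+1}}$; since $-1=e^{2\pi i\,2^k/2^{k+1}}$ is itself a $2^{k+1}$-th root of unity, in either branch $\lambda_s$ is a $2^{k+1}$-th root of unity. Hence all diagonal entries of $d$ are $2^{k+1}$-th roots of unity sharing the common exponent $k+1$, and the characterization returns $d\in\mathcal{CH}$, which is exactly the contrapositive we wanted.

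I do not expect a serious obstacle here; the only points requiring care are (i) invoking that $d$ is unitary so that $|\lambda_s|=1$, which is what forces a root-of-unity value of $\lambda_s^2$ to come from a $\lambda_s$ that is itself a root of unity (rather than some other modulus-one number), and (ii) tracking the $\pm$ ambiguity together with the single-$k$-for-all-entries form of the characterization, both of which are dispatched by noting that $-1$ is a $2^{k+1}$-th root of unity and that passing from exponent $k$ to $k+1$ preserves the common-exponent requirement.
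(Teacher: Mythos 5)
Your proof is correct, but it takes a genuinely different route from the paper's. The paper argues \emph{directly} in the $Z$-rotation picture of Eq.~\ref{diagForm}: if $d\notin\mathcal{CH}$, it must involve a rotation by an angle $\pi/r$ whose denominator $r$ is not a power of two, and squaring sends $\pi/r\mapsto 2\pi/r$, which can only cancel factors of two, so the odd part of $r$ --- and hence non-membership --- persists. You argue the \emph{contrapositive} in the spectral picture: if every entry of $d^2$ is a $2^k$-th root of unity, then every entry of $d$ is a $2^{k+1}$-th root of unity (the $\pm$ branch being harmless because $-1$ is itself a $2^{k+1}$-th root of unity), hence $d\in\mathcal{CH}$. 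Both rest on the same 2-adic fact, but your version is arguably tighter: the paper's phrase ``$d$ must have a $Z$ rotation about some angle $\pi/r$'' quietly skates over the non-uniqueness of $Z$-rotation decompositions ($d\notin\mathcal{CH}$ means \emph{no} decomposition of the form of Eq.~\ref{diagForm} exists, not that some particular decomposition contains a bad angle), whereas the entry-wise characterization --- which the paper itself states and uses in the immediately preceding lemma on permutation conjugation --- sidesteps that issue entirely. Two small remarks. First, the entry characterization is only valid after fixing the global phase (the paper's convention of normalizing the upper-left entry to $1$); your argument survives this because gauge-fixing commutes with squaring, but it deserves a sentence. Second, your point of care (i) is a red herring: unitarity is not needed, since any complex square root of a root of unity automatically has modulus one --- the genuine care points are the global phase and the common exponent, the latter of which you handle correctly.
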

\begin{proof} If $d\notin \mathcal{CH}$, it must have a $Z$ rotation about some angle $\frac{\pi}{r}$ with $r \ne 2^k$ for any $k$. We will write this as $r=2^k p$ where $p\ne 2^k$. Then $d^2$ must have a $Z$ rotation about $\frac{2\pi}{r}$ and since $p$ does not divide 2, we have that $d^2 \notin \mathcal{CH}$.\footnote{Note that this does not extend to $d^3$ since $\frac{\pi}{3}$ rotations cubed are proportional to Identity which is clearly in $\mathcal{CH}$.}
\end{proof}

In the gate picture, we write a diagonal element, $d$, as a product of well-known diagonal gates. Any diagonal gate in $\mathcal{CH}$ on $n$ qubits can be written as some combination of gates:

\begin{equation*}
    \Lambda^{n-1}(Z^{1/2^{k_j}}), \Lambda^{n-2}(Z^{1/2^{k_j}}), \cdots , \Lambda^{1}(Z^{1/2^{k_j}}), Z^{1/2^{k_j}}.
\end{equation*}

With 
\begin{equation*}
Z^{1/2^{k_j}} \triangleq \left [ \begin{array}{lc} 1 & 0 \\ 0 & e^{i\pi/2^{k_j}}\\ \end{array} \right ].    
\end{equation*}

A gate $\Lambda^{m}(Z^{1/2^{l}})$ is in $\mathcal{CH}_{l+m}$. All gates on $n$ qubits with $l+m\le k$ form the group $\mathcal{D}_k^n$. In the circuit picture, all elements are distinct, which allows us to count the number of elements in $\mathcal{D}_k^n$.

\begin{equation}
    |\mathcal{D}_k^n| = \prod_{j=0}^{\min (k-1,n-1)}(2^{k-j})^{\binom{n}{j+1}}.
\end{equation}

Finally, in the matrix picture, we have all $2^n \times 2^n$ diagonal matrices with entries in the $2^k$th root of unity. To fix the $U(1)$ gauge, we will always choose the upper-left entry in the matrix to be $1$. We refer to these groups as $Diag_k^n$. We have the following relation between groups: $\mathcal{D}_{k}^n \subseteq Diag_{k}^n \subseteq \mathcal{D}_{k+n}^n \subseteq Diag_{k+n}^n$. In other words, $\mathcal{D}_{k2}^n$ can always be chosen to be large enough to contain all elements of $Diag_{k1}^n$. Therefore, both $Diag^n$ and $\mathcal{D}^n$ contain all diagonal elements in $\mathcal{CH}$. The group $Diag^n_k$ has the property that it is preserved under conjugation by any permutation matrix. 

\begin{Lem}
For any $d \in Diag_k^n$, $d^2 \in Diag_{k-1}^n$. 
\end{Lem}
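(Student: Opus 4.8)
The plan is to work directly in the matrix picture, since that is where the root-of-unity structure of the entries is most transparent. By definition, every $d \in Diag_k^n$ is a $2^n \times 2^n$ diagonal matrix whose diagonal entries lie in the $2^k$th roots of unity, with the upper-left entry gauge-fixed to $1$. First I would write the $\ell$th diagonal entry as $d_{\ell\ell} = e^{2\pi i\, m_\ell / 2^k}$ for some integer $m_\ell$, where the index $\ell=0$ corresponds to the upper-left entry and $m_0 = 0$ encodes the gauge choice.

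Next I would simply square. Since $d$ is diagonal, $d^2$ is diagonal with entries $(d_{\ell\ell})^2 = e^{2\pi i\, (2 m_\ell) / 2^k} = e^{2\pi i\, m_\ell / 2^{k-1}}$, which is manifestly a $2^{k-1}$th root of unity. The gauge condition survives because $(d_{00})^2 = 1^2 = 1$. Hence every diagonal entry of $d^2$ lies in the $2^{k-1}$th roots of unity and the upper-left entry remains $1$, so $d^2 \in Diag_{k-1}^n$ directly from the definition of the group.

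There is essentially no obstacle here: the statement is the matrix-picture analogue of the earlier lemma $d \in \mathcal{D}_k^n \implies d^2 \in \mathcal{D}_{k-1}^n$ and reduces to the elementary fact that squaring a $2^k$th root of unity yields a $2^{k-1}$th root of unity. The only point worth making explicit is the bookkeeping that squaring does not leave the gauge-fixed family, which is immediate from $1^2 = 1$, and that $Diag_{k-1}^n$ is closed under the resulting collection of entries since they are assigned independently to each diagonal position.
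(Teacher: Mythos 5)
Your proof is correct and matches the paper's own argument: both reduce the claim to the elementary fact that squaring a $2^k$th root of unity gives a $2^{k-1}$th root of unity, applied entrywise to the diagonal. Your explicit treatment of the gauge-fixed upper-left entry is a minor bookkeeping detail the paper leaves implicit, but the approach is the same.
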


This follows by noting that all elements in Fig.~\ref{fig:diagGroups} commute and that the square of any $2^k$th root of unity is a $2^{k-1}$th root of unity. Thus, the square of any element in $Diag_k^n$ is an element in $Diag_{k-1}^n$. 

\bibliographystyle{unsrt}
\bibliography{references}

\begin{thebibliography}{10}

\bibitem{Hu2021ClimbingHierarchy}
Jingzhen Hu, Qingzhong Liang, and Robert Calderbank.
\newblock {Climbing the Diagonal Clifford Hierarchy}.
\newblock 10 2021.

\bibitem{Anderson2024}
Jonas~T. Anderson.
\newblock On groups in the qubit clifford hierarchy.
\newblock {\em Quantum}, 8:1370, 6 2024.

\bibitem{Pllaha2020}
Tefjol Pllaha, Narayanan Rengaswamy, Olav Tirkkonen, and Robert Calderbank.
\newblock Un-weyl-ing the clifford hierarchy.
\newblock {\em Quantum}, 4:370, 12 2020.

\bibitem{Rengaswamy2019}
Narayanan Rengaswamy, Robert Calderbank, and Henry~D. Pfister.
\newblock Unifying the clifford hierarchy via symmetric matrices over rings.
\newblock {\em Physical Review A}, 100:022304, 8 2019.

\bibitem{He2024}
Zhiyang He, Luke Robitaille, and Xinyu Tan.
\newblock Permutation gates in the third level of the clifford hierarchy.
\newblock 10 2024.

\bibitem{Bengtsson2014}
Ingemar Bengtsson, Kate Blanchfield, Earl Campbell, and Mark Howard.
\newblock Order 3 symmetry in the clifford hierarchy.
\newblock {\em Journal of Physics A: Mathematical and Theoretical}, 47:455302,
  11 2014.

\bibitem{Jochym-OConnor2018DisjointnessGates}
Tomas Jochym-O’Connor, Aleksander Kubica, and Theodore~J. Yoder.
\newblock {Disjointness of Stabilizer Codes and Limitations on Fault-Tolerant
  Logical Gates}.
\newblock {\em Physical Review X}, 8(2):021047, 5 2018.

\bibitem{Anderson2016}
Jonas~T. Anderson and Tomas Jochym-O'Connor.
\newblock Classification of transversal gates in qubit stabilizer codes.
\newblock {\em Quantum Information and Computation}, 16:771--802, 7 2016.

\bibitem{deSilva2021EfficientDimensions}
Nadish de~Silva.
\newblock {Efficient quantum gate teleportation in higher dimensions}.
\newblock {\em Proceedings of the Royal Society A: Mathematical, Physical and
  Engineering Sciences}, 477(2251):20200865, 7 2021.

\bibitem{deSilva2025}
Nadish de~Silva and Oscar Lautsch.
\newblock The clifford hierarchy for one qubit or qudit.
\newblock 1 2025.

\bibitem{Cui2017}
Shawn~X. Cui, Daniel Gottesman, and Anirudh Krishna.
\newblock Diagonal gates in the clifford hierarchy.
\newblock {\em Physical Review A}, 95:012329, 1 2017.

\bibitem{Zeng2008}
Bei Zeng, Xie Chen, and Isaac~L. Chuang.
\newblock Semi-clifford operations, structure of $\mathcal{C}_k$ hierarchy, and
  gate complexity for fault-tolerant quantum computation.
\newblock {\em Physical Review A}, 77:042313, 4 2008.

\bibitem{Gottesman1999}
Daniel Gottesman and Isaac~L. Chuang.
\newblock Demonstrating the viability of universal quantum computation using
  teleportation and single-qubit operations.
\newblock {\em Nature}, 402:390--393, 11 1999.

\bibitem{Beigi2010}
S.~Beigi and P.W. Shor.
\newblock $\mathcal{C}_3$, semi-clifford and genralized semi-clifford
  operations.
\newblock {\em Quantum Information and Computation}, 10:41--59, 1 2010.

\bibitem{Draper2009}
Thomas~Gordon Draper.
\newblock Nonlinear complexity of boolean permutations.
\newblock 2009.
\newblock AAI3372837.

\bibitem{Canniere2007}
Christophe de~Canniere.
\newblock Analysis and design of symmetric encryption algorithms.
\newblock 2007.

\bibitem{SEInverses}
Quantum~Computing StackExchange.
\newblock Inverses and the clifford hierarchy.

\bibitem{Anderson2024a}
Jonas~T. Anderson and Matthew Weippert.
\newblock Controlled gates in the clifford hierarchy.
\newblock 10 2024.

\bibitem{Campbell2017}
Earl~T. Campbell and Mark Howard.
\newblock Unified framework for magic state distillation and multiqubit gate
  synthesis with reduced resource cost.
\newblock {\em Physical Review A}, 95:022316, 2 2017.

\end{thebibliography}

\end{document}